\DeclarePairedDelimiterX{\norm}[1]{\lVert}{\rVert}{#1}
\DeclarePairedDelimiterX{\bnorm}[1]{\biggl\lVert}{\biggr\rVert}{#1}
\DeclarePairedDelimiterX{\abs}[1]{\lvert}{\rvert}{#1}
\newtheorem{definition}{Definition}
\newtheorem{theorem}{Theorem}
\newtheorem{corollary}{Corollary}
\newtheorem{proof}{Proof}
\def\de{\overset{\Delta}{=}} 
\def\E{\mathbb{E}} 
\def\R{\mathbb{R}}
\def\X{\tilde{X}}
\def\Y{\tilde{Y}}
\def\x{\tilde{x}}
\def\y{\tilde{y}}
\def\XX{\mathcal{X}}
\def\YY{\mathcal{Y}}
\def\C{K_*}
\def\c{K}
\def\A{K_1}
\def\B{K_2}
\def\p{p_{K_*}}
\def\pC{p_{K}}
\def\pA{p_{K_1}}
\def\pB{p_{K_2}}
\def\f{f}
\def\D{D_{\textsc{kl}}}
\def\I{I}
\def\ind{\mathbbm{1}} 
\def\card{\textrm{card}}
\title{Information Laundering for Model Privacy}
\author{
Xinran~Wang \\ School of Statistics, University of Minnesota \\
Yu Xiang \\ Electrical and Computer Engineering, University of Utah \\
Jun Gao \\ Department of Mathematics, Stanford University \\
Jie~Ding \\ School of Statistics, University of Minnesota
\thanks{The work was supported by the Army Research Office (ARO) under grant number W911NF-20-1-0222.}
}
\date{\today}
\begin{document} 

\maketitle

\begin{abstract}

In this work, we propose information laundering, a novel framework for enhancing model privacy.
Unlike data privacy that concerns the protection of raw data information, model privacy aims to protect an already-learned model that is to be deployed for public use.
The private model can be obtained from general learning methods, and its deployment means that it will return a deterministic or random response for a given input query.
An information-laundered model consists of probabilistic components that deliberately maneuver the intended input and output for queries to the model, so the model's adversarial acquisition is less likely. Under the proposed framework, we develop an information-theoretic principle to quantify the fundamental tradeoffs between model utility and privacy leakage and derive the optimal design.

\end{abstract}


\section{Introduction}
\label{sec_intro}

An emerging number of applications involve the following user-scenario.
Alice developed a model that takes a specific query as input and calculates a response as output. The model is a stochastic black-box that may represent a novel type of ensemble models, a known deep neural network architecture with sophisticated parameter tuning, or a physical law described by stochastic differential equations. 
Bob is a user that sends a query to Alice and obtains the corresponding response for his specific purposes, whether benign or adversarial.
Examples of the above scenario include many recent Machine-Learning-as-a-Service (MLaaS)  services~\cite{alabbadi2011mobile,ribeiro2015mlaas,DingAssist} and artificial intelligence chips,
where Alice represents a learning service provider, and Bob represents users.

Suppose that Bob obtains sufficient paired input-output data as generated from Alice's black-box model, it is conceivable that Bob could treat it as supervised data and reconstruct Alice's model to some extent. 
From the view of Alice, her model may be treated as valuable and private.
As Bob that queries the model may be benign or adversarial, Alice may intend to offer limited utility for the return of enhanced privacy.
The above concern naturally motivates the following problem.

\textit{(Q1) How to enhance the privacy for an already-learned model?}
Note that the above problem is not about data privacy, where the typical goal is to prevent adversarial inference of the data information during data transmission or model training.
In contrast, model privacy concerns an already-established model. 
We propose to study a general approach to jointly maneuver the original query's input and output so that Bob finds it challenging to guess Alice's core model. As illustrated in Figure~\ref{fig_system}a,   Alice's model is treated as a transition kernel (or communication channel) that produces $\Y$ conditional on any given $\X$. Compared with an honest service Alice would have provided (Figure~\ref{fig_system}b), the input $\X$ is a maneuvered version of Bob's original input $X$; Moreover, Alice may choose to return a perturbed outcome $Y$ instead of $\Y$ to Bob. Consequently, the apparent kernel from Bob's input query $X$ to the output response $Y$ is a cascade of three kernels, denoted by $\c$ in Figure~\ref{fig_system}a. 
The above perspective provides a natural and general framework to study model privacy. 
Admittedly, if Alice produces a (nearly) random response, adversaries will find it difficult to steal the model, while benign users will find it useless.
Consequently, we raise another problem.

\begin{figure}[tb]
\begin{center}
\centerline{\includegraphics[width=1\columnwidth]{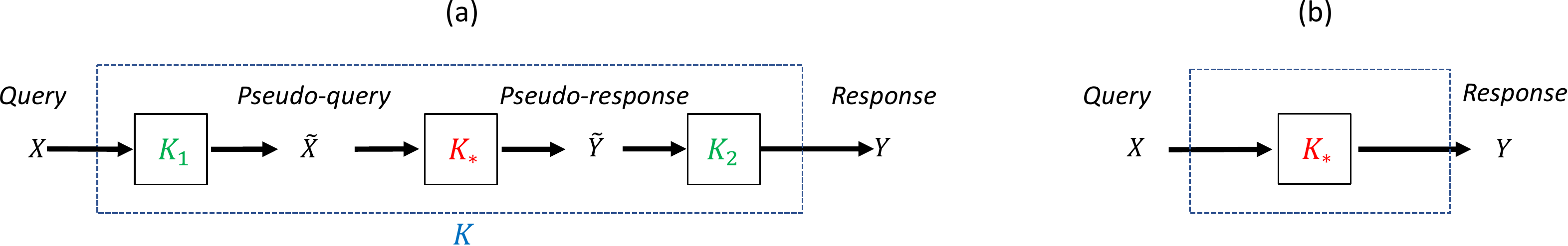}}
\vskip -0.1in
\caption{Illustration of (a) Alice's effective system for public use, and (b) Alice's idealistic system not for public use. In the figure, $\C$ denotes the already-learned model/API, $\A$ denotes the kernel that perturbs the input data query by potential adversaries, and $\B$ denotes the kernel that perturbs the output response from $\C$ to publish the final response $Y$.}
\label{fig_system}
\end{center}
\end{figure}

\textit{(Q2) How to formulate the model privacy-utility tradeoff, and what is the optimal way of imposing privacy?}
To address this question, we formulate a model privacy framework from an information-theoretic perspective, named information laundering. 
We briefly describe the idea below. 
The general goal is to jointly design the input and output kernels ($\A$ and $\B$ in Figure~\ref{fig_system}a) that deliberately maneuver the intended input and output for queries to the model so that 1) the effective kernel ($\c$ in Figure~\ref{fig_system}a) for Bob is not too far away from the original kernel ($\C$ in Figure~\ref{fig_system}a), and 2) adversarial acquisition of the model becomes difficult. Alternatively, Alice `launders' the input-output information maximally given a fixed utility loss. 
To find the optimal way of information laundering, we propose an objective function that involves two components: the first being the information shared between $X,\X$ and between $\Y,Y$, and the second being the average  Kullback-Leibler (KL) divergence between the conditional distribution describing $\c$ and $\C$. 
Intuitively, the first component controls the difficulty of guessing $\C$ sandwiched between two artificial kernels $\A$ and $\B$, while the second component ensures that overall utility is maximized under the same privacy constraints. 
By optimizing the objective for varying weights between the components, we can quantify the fundamental tradeoffs between model utility and privacy.

\subsection{Related Work}
We introduce some closely related literature below.
Section~\ref{sec_dis} will incorporate more technical discussions on some related but different frameworks, including information bottleneck, local data privacy, information privacy, and adversarial model attack. 

A closely related subject of study is data privacy, which has received extensive attention in recent years due to societal concerns~\cite{voigt2017eu,evans2015biometrics,cross2020privacy,Google,Facebook}.
Data privacy concerns the protection of (usually personal) data information from different perspectives, including lossless cryptography~\cite{yao1982protocols,chaum1988multiparty}, randomized data collection~\cite{evfimievski2003limiting,kasiviswanathan2011can}, statistical database query~\cite{dwork2004privacy,dwork2011differential}.
A common goal in data privacy is to obfuscate individual-level data values while still enabling population-wide learning.
In contrast, the subject of model privacy focuses on protecting a single learned model ready to deploy. 
For example, we want to privatize a classifier to deploy on the cloud for public use, whether the model is previously trained from raw image data or a data-private procedure. 

Another closely related subject is model extraction proposed in~\cite{tramer2016stealing}, where Bob's goal is to reconstruct Alice's model from several queries' inputs and outputs, knowing what specific model Alice uses. For example, suppose that Alice's model is a generalized linear regression with $p$ features. In that case, it is likely to be reconstructed using $p$ queries of the expected mean (a known function of $X \beta$) by solving equations~\cite{tramer2016stealing}.  In the supervised learning scenario, when only labels are returned to any given input, model extraction could be cast as an active learning problem where the goal is to query most efficiently \cite{chandrasekaran2018model}.
Despite existing work from model reconstruction perspective, principled methods and theories to enhance model privacy remain an open problem.

\subsection{Contributions and Outline}
The main contributions of this work are three folds.
First, we develop a novel concept, theory, and method, generally referred to as information laundering, to study model privacy.
Unlike data privacy that concerns the protection of raw data information, model privacy aims to privatize an already-learned model for public use.
To the best of the authors' knowledge, we present the first framework to study model privacy in a principled manner. 
Second, under the developed information-theoretic framework, we cast the tradeoffs between model privacy and utility as a general optimization problem.
We derive the optimal solution using the calculus of variations and provide extensive discussions on the solution's insights from different angles. 
Third, we develop a concrete algorithm, prove its convergence, and elaborate on some specific cases. 

The paper is organized as follows.
In Section~\ref{sec_form}, we describe the problem formulation and a general approach to protect the model. 
In Section~\ref{sec_launder}, we propose the information laundering method that casts the model privacy-utility tradeoff as an optimization problem and derives a general solution. 
In Section~\ref{sec_dis}, we provide some additional discussions of the related frameworks, including information bottleneck, local data privacy, information privacy, and adversarial model attack. 
In Section~\ref{sec_con}, we conclude the paper with some potential future work. 
In the Appendix, we provide the proofs of the main results and experimental studies.

\section{Formulation} \label{sec_form}

\subsection{Background}

The private model can be obtained from general learning methods, and its deployment means that it will return a response for a given input query.
Suppose that $\XX$ and $\YY$ are the input and output alphabets (data space), respectively.

\begin{definition}[Learned model]
    A learned model is a kernel $p: \XX \times \YY \rightarrow [0,1]$, which induces a class of conditional distributions $\{p(\cdot \mid x): x \in \XX\}$.
\end{definition}

A model in the above definition is also referred to as a communication channel in information theory. 
A model can be regarded as the input-output (or Alice's application programming interface, API) offered to Bob. 
Examples include a regression/classification model that outputs predicted labels, a clustering model that outputs the probabilities of belonging to specific groups, and a stochastic differential equation system that outputs the likely paths for various inputs variables.
It does not matter where the model comes from since we are only concerned about the privacy of a fixed given model. 
The (authentic) model of Alice is denoted by $\p$.


What is model privacy? Our perspective is that privacy is not an intrinsic quantity associated with a model; instead, it is a measure of information that arises from interactions between the model and its queries.
In our context, the interactions are through $ X $ (offered by Bob) and $ Y $ (offered by Alice). 
The key idea of enhancing Alice's model privacy is to let Alice output noisy predictions $\tilde{Y}$ for any input $X$ so that Bob cannot easily infer Alice's original model. Similarly, Alice may choose to manipulate $X$ as well before passing it through $\C$. 
Alternatively, Alice intends to 
1) impose some ambiguity between $X,\X$, and between $Y,\Y$, which conceivably will produce response deviating from the original one, and 
2) seek the $\c$ closest to $\C$ under the same amount of ambiguity imposed. 
Motivated by the above concepts, we introduce the following notion.
The information-laundered model of Alice is denoted by $\pC$.

\begin{definition}[Information-laundered model]
    A information-laundered model with respect to a given model $\C$ is a model $\c$ that consists of three internal kernels 
    $ \c  = \A \circ \C \circ \B $ (illustrated in Figure~\ref{fig_system}).
\end{definition}

\subsection{Notation} \label{subsec_notation}

We let $\p(\cdot \mid \cdot),\pA(\cdot \mid \cdot),\pB(\cdot \mid \cdot),\pC(\cdot \mid \cdot)$ denote the kernels that represent the authentic model, input kernel, output kernel, and the information-laundered model, respectively. 
We let $p_X(\cdot)$ denote the marginal distribution of $X$. Similar notation is for $p_{\X}(\cdot),p_{\Y}(\cdot),$ and $p_{Y}(\cdot)$.
Note that the $p_{Y}$ implicitly depends on the above conditional distributions. 
We use $p_{\A \circ \C}(\cdot \mid \cdot)$ and $p_{\C \circ \B}(\cdot \mid \cdot)$ to denote cascade conditional distributions of $\Y \mid X$ and $Y \mid \X$, respectively. 

Throughout the paper, random variables are denoted by capital letters. 
Suppose that $X \in \XX$, $\X\in \tilde{\XX}$, $\Y\in \tilde{\YY}$, and $Y \in \YY$.
For technical convenience, we will assume that $\XX,\tilde{\XX}, \tilde{\YY} \YY$ are finite alphabets unless otherwise stated. We will discuss some special cases when some of them are the same. 
Our theoretical results apply to continuous alphabets as well under suitable conditions. 
For notational convenience, we write the sum $\sum_{x \in \X} u(x)$ as $\sum_{x} u(x)$ for any function $u$.

With a slight abuse of notation, we will use $p$ to denote a distribution, density function, or transition kernel, depending on the context. 

\section{Information Laundering} \label{sec_launder}

\subsection{The Information Laundering Principle} \label{subsec_principle}

The information laundering method is an optimization problem formulated from the concept of KL-divergence between the (designed) effective kernel and the original kernel, with constraints of the privacy leakage during the model-data interaction. 
In particular, we propose to minimize the following objective function over $(\pA,\pB)$,
\begin{align}
    L(\pA,\pB)
    \de \E_{X \sim p_x} \D(\p(\cdot \mid X), \pC(\cdot \mid X)) + \beta_1 \I(X; \X) + \beta_2 \I(Y; \Y) . \label{eq_IL}  
\end{align}

In the above, $\A$ and $\B$ are implicitly involved in each additive term of $L$, and $\beta_1 \geq 0, \beta_2 \geq 0$ are constants that determine the utility-privacy tradeoffs. 
Small values of $\beta_1$ and $\beta_2$ (e.g., zeros) pushes the $\c$ to be the same as $\C$, while large values of $\beta_1$ pushes $\X$ to be nearly-independent with $X$ (similarly for $\beta_2$).
It is worth mentioning that the principle presumes a given alphabet (or representation) for $\X$ and $\Y$. The variables to optimize over is the transition laws $X \rightarrow \X$ and $\Y \rightarrow Y$. 

The objective in (\ref{eq_IL}) may be interpreted in the following way.
On the one hand, Alice aims to develop an effective system of $\c$ that resembles the authentic one $\C$ for the utility of benign users. This goal is realized through the first term in (\ref{eq_IL}), which is the average divergence between two system dynamics. 
On the other hand, Alice's model privacy leakage is through interactions with Bob, which in turn is through the input $ X $ (publicly offered by Bob) and output $ Y $ (publicly offered by Alice).
Thus, we control the information propagated through both the input-interfaces and out-interfaces, leading to the second and third terms in (\ref{eq_IL}).  

We note that the above objective function may also be formulated in alternative ways from different perspectives. 
For example, we may change the third term to be $\beta_2 \I(Y; \Y \mid X, \X)$, interpreted in the way that Alice will design $\A$ first, and then design $\B$ conditional on $\A$.  Likewise, we may change the second term to be $\beta_1 \I(X; \X \mid \Y, Y)$, meaning that $\B$ is designed first.
From Bob's perspective, we may also change the third term to $\beta_2 \I(Y; \Y \mid X)$, interpreted for the scenario where Bob conditions on the input information $X$ during model extraction.  
Additionally, from the perspective of adaptive interactions between Alice and Bob, we may consider $p_{X}$ as part of the optimization and solve the max-min problem
$\max_{p_{X}} \min_{\pA,\pB} L(\pA,\pB)$.
We leave these alternative views to future work.

\subsection{The optimal solution}

We derive the solution that corresponds to the optimal tradeoffs and point out some nice interpretations of the results. 
The derivation is nontrivial as the functional involves several nonlinear terms of the variables to optimize over. 
Note that for the notation defined in Subsection~\ref{subsec_notation}, only $p_X$ and $\p$ are known and others are (implicitly) determined by $\pA,\pB$.

\begin{theorem} \label{thm_main}
    The optimal solution of (\ref{eq_IL}) satisfies the following equations.
    \begin{align}
        \pA(\x \mid x)
        &= \kappa_x p_{\X}(\x) \exp\biggl\{ \frac{1}{\beta_1} \E_{Y \mid X=x \sim \p} \frac{p_{\C\circ\B}(Y \mid \x)}{\pC(Y \mid x)} 
        - \frac{\beta_2}{\beta_1} \E_{\Y, Y \mid \X = \x} \log \frac{\pB(Y \mid \Y)}{p_Y(Y)} \biggr\} , \label{eq1} \\
        \pB(y \mid \y)
        &= \tau_{\y} p_{Y}(y) \exp \biggl\{ \frac{1}{\beta_2 p_{\Y}(\y)} \E_{X \sim p_X } \frac{\p(y \mid X) \cdot p_{\A \circ \C}(\y \mid X)}{\pC(y \mid X)} \biggr\} ,\label{eq2} 
    \end{align}
    where $\kappa_x$ and $\tau_{\y}$ are normalizing constants implicitly defined so that the conditional density function integrates to one. 
\end{theorem}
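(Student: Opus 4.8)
\emph{Proof plan.}
The plan is to treat (\ref{eq_IL}) as a constrained optimization over the (finitely many, by the finite-alphabet assumption) free variables $\{\pA(\x\mid x)\}$ and $\{\pB(y\mid\y)\}$, subject to the normalization constraints $\sum_{\x}\pA(\x\mid x)=1$ for every $x$ and $\sum_{y}\pB(y\mid\y)=1$ for every $\y$, and to apply the method of Lagrange multipliers (equivalently, a first-variation argument when the alphabets are continuous). Introducing multipliers $\lambda_x,\mu_{\y}$ and writing $\mathcal{L}=L(\pA,\pB)+\sum_x\lambda_x\bigl(\sum_{\x}\pA(\x\mid x)-1\bigr)+\sum_{\y}\mu_{\y}\bigl(\sum_y\pB(y\mid\y)-1\bigr)$, the goal is to solve $\partial\mathcal{L}/\partial\pA(\x\mid x)=0$ and $\partial\mathcal{L}/\partial\pB(y\mid\y)=0$. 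A necessary preliminary step is to express every term of $L$ through the two kernels and the fixed objects $p_X,\p$: the divergence term equals $-\sum_x p_X(x)\sum_y\p(y\mid x)\log\pC(y\mid x)$ up to a constant, with $\pC(y\mid x)=\sum_{\x,\y}\pA(\x\mid x)\,\p(\y\mid\x)\,\pB(y\mid\y)$; and $\I(X;\X)$, $\I(Y;\Y)$ are written via the induced marginals $p_{\X}(\x)=\sum_xp_X(x)\pA(\x\mid x)$, $p_{\Y}(\y)=\sum_{\x}p_{\X}(\x)\p(\y\mid\x)$, $p_Y(y)=\sum_{\y}p_{\Y}(\y)\pB(y\mid\y)$, using in particular $\I(Y;\Y)=\sum_{y,\y}p_{\Y}(\y)\pB(y\mid\y)\log\bigl(\pB(y\mid\y)/p_Y(y)\bigr)$.

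The heart of the argument is a careful chain-rule differentiation through all the induced quantities. I would first record the building blocks: $\partial\pC(y\mid x')/\partial\pA(\x\mid x)=\ind\{x'=x\}\,p_{\C\circ\B}(y\mid\x)$ and $\partial\pC(y'\mid x)/\partial\pB(y\mid\y)=\ind\{y'=y\}\,p_{\A\circ\C}(\y\mid x)$; the standard mutual-information identity $\partial\I(X;\X)/\partial\pA(\x\mid x)=p_X(x)\log\bigl(\pA(\x\mid x)/p_{\X}(\x)\bigr)$; and, the delicate one, the derivative of $\I(Y;\Y)$, which depends on $\pA$ through both $p_{\Y}$ and $p_Y$ and on $\pB$ through $\pB$ itself and through $p_Y$. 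The key simplification there is that the contributions carrying $\partial p_Y/\partial(\cdot)$ telescope to pure constants --- in the $\pA$-derivative using $\sum_y p_{\C\circ\B}(y\mid\x)=1$, and in the $\pB$-derivative using $\sum_{\y}p_{\Y}(\y)\pB(y\mid\y)=p_Y(y)$ --- leaving $\partial\I(Y;\Y)/\partial\pA(\x\mid x)=p_X(x)\,\E_{\Y,Y\mid\X=\x}\log\bigl(\pB(Y\mid\Y)/p_Y(Y)\bigr)-p_X(x)$ and $\partial\I(Y;\Y)/\partial\pB(y\mid\y)=p_{\Y}(\y)\log\bigl(\pB(y\mid\y)/p_Y(y)\bigr)$.

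Assembling these, the stationarity condition for $\pA(\x\mid x)$, after dividing by $\beta_1p_X(x)$ (assuming $p_X(x)>0$), becomes $\log\bigl(\pA(\x\mid x)/p_{\X}(\x)\bigr)=\tfrac{1}{\beta_1}\E_{Y\mid X=x\sim\p}\tfrac{p_{\C\circ\B}(Y\mid\x)}{\pC(Y\mid x)}-\tfrac{\beta_2}{\beta_1}\E_{\Y,Y\mid\X=\x}\log\tfrac{\pB(Y\mid\Y)}{p_Y(Y)}+c_x$, where $c_x$ collects the term $\beta_2/\beta_1$ and the Lagrange contribution $-\lambda_x/(\beta_1p_X(x))$; exponentiating and absorbing $e^{c_x}$ into the normalizer $\kappa_x$ fixed by $\sum_{\x}\pA(\x\mid x)=1$ gives (\ref{eq1}). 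Since $\I(X;\X)$ does not involve $\pB$, the stationarity condition for $\pB(y\mid\y)$ is simply $\beta_2p_{\Y}(\y)\log\bigl(\pB(y\mid\y)/p_Y(y)\bigr)=\E_{X\sim p_X}\tfrac{\p(y\mid X)\,p_{\A\circ\C}(\y\mid X)}{\pC(y\mid X)}-\mu_{\y}$, and solving for $\pB(y\mid\y)$ while folding $e^{-\mu_{\y}/(\beta_2p_{\Y}(\y))}$ into $\tau_{\y}$ yields (\ref{eq2}).

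I expect the main obstacle to be exactly the $\I(Y;\Y)$ differentiation together with the bookkeeping of the implicit dependence of $p_{\X},p_{\Y},p_Y,\pC,p_{\A\circ\C},p_{\C\circ\B}$ on the two kernels; getting the telescoping cancellations precisely right is what makes the answer collapse to the clean form in the statement. I would close with the remark that $L$ is not jointly convex in $(\pA,\pB)$, so (\ref{eq1})--(\ref{eq2}) are the first-order (necessary) conditions satisfied by every stationary point, in particular by any global optimum; and since the right-hand sides themselves depend on $\pA,\pB$ through the induced quantities, the pair of equations is a coupled fixed-point characterization rather than a closed form, which is what motivates the iterative scheme developed afterward.
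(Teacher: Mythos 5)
Your proposal is correct and follows essentially the same route as the paper's proof: Lagrange multipliers for the normalization constraints, the same chain-rule building blocks (e.g.\ $\partial\pC(y\mid x)/\partial\pA(\x\mid x)=p_{\C\circ\B}(y\mid\x)$ and $\partial p_Y(y)/\partial\pA(\x\mid x)=p_X(x)p_{\C\circ\B}(y\mid\x)$), the same telescoping of the $\partial p_Y$ contributions to constants, and the same exponentiation/absorption into $\kappa_x,\tau_{\y}$. The stationarity conditions you derive match the paper's equations (\ref{eq_3}) and (\ref{eq_4}) exactly, including the constant $-\beta_2 p_X(x)$ folded into $\tilde\lambda_1(x)$.
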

 

Note that the distributions of $\X$, $\Y$, $Y$, and $\Y, Y \mid \X$,  implicitly depend on $\pA$ and $\pB$. 
The above theorem naturally leads to an iterative algorithm to estimate the unknown conditional distributions $\pA$ and $\pB$.
In particular, we may alternate Equations (\ref{eq1}) and (\ref{eq2}) to obtain $\pA^{(\ell)}(\x \mid x), \pB^{(\ell)}(y \mid \y)$ from $\pA^{(\ell-1)}(\x \mid x), \pB^{(\ell-1)}(y \mid \y)$ at step $\ell=1,2,\ldots$ with random initial values at $\ell=0$.
The pseudocode is summarized in Algorithm~\ref{algo_general}.

In the next theorem, we show that the convergence of the algorithm.
The sketch of the proof is described below.
First, we treat the original objective $L$ as another functional $J$ of four independent variables, $\pA,\pB,h_1,h_2$, evaluated at $h_1=p_{\X}$ and $h_2=p_{Y}$. 
Using a technique 
historically used to prove the convergence of the Blahut-Arimoto algorithm for calculating rate-distortion functions in information theory, we show that $J \geq L$. We also show that $L $ is convex in each variable so that the objective function is non-increasing in each alternation between four equations. Since $L \geq 0$, the convergence is implied by the monotone convergence theorem.

\begin{algorithm*}[tb]
\vspace{-0.0 cm}
\small
\caption{Optimized Information Laundering (OIL)}
\label{algo_general}
\begin{algorithmic}[1]
\INPUT Input distribution $p_X$, private model $\p$, alphabets $\XX,\tilde{\XX},\tilde{\YY},\YY$ for $X,\X,\Y,Y$, respectively. 
\OUTPUT Transition kernels $\pA$ and $\pB$
\STATE Let $p^{(0)}_{\X}$ and $p^{(0)}_{Y}$ denote the uniform distribution on $\tilde{\XX}$ and $\YY$, respectively.
\FOR{$t = 0 \to T-1$} 
\STATE Calculate
    \begin{align*}
        \pA^{(t+1)}(\x \mid x)
        &= \kappa_x p_{\X}^{(t)}(\x) \exp\biggl\{ \frac{1}{\beta_1} \E_{Y \mid x \sim \p} \frac{p_{\C\circ\B}^{(t)}(Y \mid \x)}{\pC^{(t)}(Y \mid x)} \\
        & \quad - \frac{\beta_2}{\beta_1} \E_{\Y, Y \mid \x \sim p_{\C \circ \B}^{(t)}} \log \frac{\pB^{(t)}(Y \mid \Y)}{p_Y^{(t)}(Y)} \biggr\} ,   \\
        \pB^{(t+1)}(y \mid \y)
        &= \tau_{\y} p_{Y}^{(t)}(y) \exp \biggl\{ \frac{1}{\beta_2 p_{\Y}^{(t)}(\y)} \E_{X \sim p_X } \frac{\p(y \mid X) \cdot p_{\A \circ \C}^{(t+1)}(\y \mid X)}{\pC^{(t+1,t)}(y \mid X)} \biggr\} , \\
        p_{\X}^{(t+1)}(\x) 
        &= \sum_{x} \pA^{(t+1)}(\x \mid x) p_X(x), \\
        p_{Y}^{(t+1)}(y) 
        &= \sum_{\y} \pB^{(t+1)}(y \mid \y) p_{\Y}^{(t+1)}(\y),
    \end{align*}
    where $p_{\A\circ\C}^{(t+1)}$, $p_{\C\circ\B}^{(t)}$, and $\pC^{(t+1,t)}$ denote the kernels cascaded from $(\pA^{(t+1)},\p)$, $(\p,\pB^{(t)})$, and $(\pA^{(t+1)},\p, \pB^{(t)})$, respectively, and $p_{\Y}^{(t+1)}$ is the marginal from $(p_{\X}^{(t+1)},\p, \pB^{(t+1)})$. 
\ENDFOR
\STATE Return $\pA=\pA^{(T)}$, $\pB=\pB^{(T)}$. 
\end{algorithmic}
\vspace{-0.0cm}%
\end{algorithm*}

\begin{theorem} \label{thm_converge}
 Algorithm~\ref{algo_general} converges to a minimum that satisfies equations (\ref{eq1}) and (\ref{eq2}).
\end{theorem}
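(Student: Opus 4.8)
The plan is to realize Algorithm~\ref{algo_general} as alternating minimization of a ``lifted'' functional and then run the Blahut--Arimoto-style monotonicity argument, exactly as sketched after the theorem. First I would introduce two auxiliary marginals $h_1$ on $\tilde{\XX}$ and $h_2$ on $\YY$ and set
\[
J(\pA,\pB,h_1,h_2) \de \E_{X\sim p_X}\D\bigl(\p(\cdot\mid X),\pC(\cdot\mid X)\bigr) + \beta_1\sum_{x,\x}p_X(x)\,\pA(\x\mid x)\log\frac{\pA(\x\mid x)}{h_1(\x)} + \beta_2\sum_{\y,y}p_{\Y}(\y)\,\pB(y\mid\y)\log\frac{\pB(y\mid\y)}{h_2(y)},
\]
where $p_{\X},p_{\Y}$ are the marginals induced by $\pA$. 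Taking $h_1=p_{\X}$, $h_2=p_Y$ recovers $\I(X;\X)$ and $\I(Y;\Y)$, so $J(\pA,\pB,p_{\X},p_Y)=L(\pA,\pB)$. The first key step is the elementary Gibbs inequality (nonnegativity of KL divergence): $\sum_a q(a)\log\frac{q(a)}{h(a)}\ge 0$ with equality iff $h=q$. Applying it to each $h$-argument gives $J(\pA,\pB,h_1,h_2)\ge L(\pA,\pB)$ for all $h_1,h_2$, with equality iff $h_1=p_{\X}$ and $h_2=p_Y$; hence $\min_{h_1,h_2}J=L$, and moreover $J\ge L\ge 0$ since each of the three terms of $L$ is nonnegative.

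The second step is to show that $J$ is convex --- indeed strictly convex on each probability simplex, thanks to the entropy/KL pieces --- in each of its four arguments separately, and that the four lines of the inner loop are precisely the corresponding block minimizers. Convexity in $h_1,h_2$ is immediate since $-\log h$ is convex, with unique minimizers $h_1=p_{\X}$, $h_2=p_Y$, which are exactly the marginal-recomputation lines. For the $\pA$-block: $\pC(\cdot\mid x)$ is affine in $\pA$ (for fixed $\pB$) and $\D(\p(\cdot\mid x),\cdot)$ is convex in its second argument, so the first term of $J$ is convex in $\pA$; after the lift the $\beta_2$-term equals $\beta_2\sum_x p_X(x)\sum_\x \pA(\x\mid x)\,\E_{\Y,Y\mid\X=\x}\log\frac{\pB(Y\mid\Y)}{h_2(Y)}$, which is \emph{affine} in $\pA$ (this is why the lift is needed: $p_{\Y}$ is affine in $\pA$, while $p_Y$ no longer appears, having been replaced by the free $h_2$), and the $\beta_1$-term contributes a convex $\pA\log\pA$ piece. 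Writing the Lagrange/variational stationarity conditions under the simplex constraints $\sum_\x\pA(\x\mid x)=1$ then reproduces Equation~(\ref{eq1}), with $\kappa_x$ absorbing the multiplier and $h_1=p_{\X},h_2=p_Y$; the symmetric analysis for $\pB$ reproduces Equation~(\ref{eq2}). Consequently each inner-loop update does not increase $J$, and after the two $h$-updates the value of $J$ at the new iterate equals $L(\pA^{(t+1)},\pB^{(t+1)})$ by the first step.

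The third step concludes convergence. From the above, $t\mapsto L(\pA^{(t)},\pB^{(t)}) = J(\pA^{(t)},\pB^{(t)},p_{\X}^{(t)},p_Y^{(t)})$ is non-increasing and bounded below by $0$, so it converges by the monotone convergence theorem. Since the alphabets are finite, all iterates lie in a compact product of probability simplices; extracting a convergent subsequence and using continuity of the update maps together with uniqueness of the block minimizers, every limit point is a coordinatewise minimum of $J$ at which $h_1=p_{\X}$, $h_2=p_Y$ --- equivalently, a point satisfying Equations~(\ref{eq1}) and~(\ref{eq2}) that minimizes $L$ in each of the blocks $\pA$ and $\pB$.

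I expect the main obstacle to be the monotonicity of the $\pA$- and $\pB$-updates in the second step. Unlike the rate-distortion Blahut--Arimoto setting, the utility term couples $\pA$ and $\pB$ through the cascade $\pC$, so the block subproblems are characterized by implicit (fixed-point) equations rather than the clean tilted-marginal formula, and the displayed updates freeze the self-referential quantities ($\pC$, $p_{\Y}$, etc.) at their previous values; one must therefore argue carefully --- combining the convexity just established with the concavity-of-$\log$ identity underlying the Blahut--Arimoto proof --- that the closed-form updates~(\ref{eq1})--(\ref{eq2}) indeed do not increase $J$ and are attained in the limit. The remaining work --- matching those stationarity conditions term by term with the displayed formulas, tracking every normalizing constant and conditional expectation --- is lengthy but mechanical once the variational setup is in place.
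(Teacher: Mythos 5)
Your proposal follows essentially the same route as the paper's proof: the same lifted functional $J(\pA,\pB,h_1,h_2)$, the same marginal-minimization lemma giving $J\ge L$ with equality at $h_1=p_{\X}$, $h_2=p_Y$, blockwise convexity via second derivatives, and monotone decrease plus boundedness below to conclude convergence to a (possibly local) minimum satisfying the stationarity equations. The obstacle you flag at the end—that the $\pA$- and $\pB$-updates freeze the self-referential cascade quantities at the previous iterate and so are fixed-point evaluations rather than exact block minimizers—is a genuine subtlety that the paper's own proof also passes over without further argument.
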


Note that the minimum is possibly a local minimum. We will later show the convergence to a global minimum in a particular case.
Next, we provide interpretations of the parameters and how they affect the final solution. 

A large $\beta_1$ in the optimization of (\ref{eq_IL}) indicates a higher weight on the term $\I(X;\X)$. In the extreme case when $\beta_1 =\infty$, minimizing $\I(X;\X)$ is attained when $\X$ is independent with $X$. Consequently, the effective model of Alice produces a fixed distribution of responses for whatever Bob queries. 
The above observation is in line with the derived equation (\ref{eq1}), which will become $\pA(\x \mid x) \approx \kappa_x p_{\X}(\x)$ (and thus $\kappa_x \approx 1$) for a large $\beta_1>0$. 

Similar to the effect of $\beta_1$, a larger $\beta_2$ imposes more independence between $\Y$ and $Y$. In the case $\beta_2=\infty$, Alice may pass the input to her internal model $\C$ but output random results.  This can be seen from either the Formulation (\ref{eq_IL}) or Equation (\ref{eq2}).  
%

For the first expectation in equation (\ref{eq1}), the term may be 
interpreted as the average likelihood ratio of $y$ conditional on $\x$ against $x$.
From Equation (\ref{eq1}), it is more likely to transit from $x$ to $\x$ in the presence of a larger likelihood ratio. This result is intuitively appealing because a large likelihood ratio indicates that $x$ may be replaced with $\x$ without harming the overall likelihood of observing $Y$. 
Intuitive explanations to other terms could be similarly made. 



\subsection{Further Discussions on Related Work} \label{sec_dis}

\textbf{Information Bottleneck: extracting instead of privatizing information}.  
The information bottleneck method~\cite{tishby2000information} is an information-theoretic approach that aims to find a parsimonious representation of raw data $X$, denoted by $\X$, that contains the maximal information of a variable $Y$ of interest. The method has been applied to various learning problems such as clustering, dimension reduction, and theoretical interpretations for deep neural networks~\cite{tishby2015deep}.
Formally, the information bottleneck method assumes the Markov chain 
\begin{align}
    \X \rightarrow X \rightarrow Y, \label{eq30}
\end{align}
and seeks the the optimal transition law from $X$ to $\X$ by minimizing the functional
$$L(p_{\X \mid X}) = \I(X;\X) - \beta \I(\X;Y), $$
with $\beta$ being is a tuning parameter that controls the tradeoffs between compression rate (the first term) and amount of meaningful information (second term).
The alphabet of the above $\X$ needs to be pre-selected and often much smaller in size compared to the alphabet of $X$ to meet the purpose of compression. 
In other words, the information that $X$ provides about $Y$ is passed through a `bottleneck' formed by the parsimonious alphabet of $\X$.

A similarity between the information bottleneck method and the particular case of information laundering in Subsection~\ref{subsec_IL_X} is that they both optimize a functional of the transition law of $X \rightarrow \X$. Nevertheless, their objective and formulation are fundamentally different. 
First, the objective of information bottleneck is to compress the representation while preserving meaningful information, under the assumption of (\ref{eq30}); Our goal is to distort $X$ while minimizing the gap between the (random) functionality of $X \rightarrow Y$, under a different Markov chain $X \rightarrow \X \rightarrow Y$.

\textbf{Data Privacy and Information Privacy: protecting data instead of a model}.
The tradeoffs between individual-level data privacy and population-level learning utility have motivated active research on what is generally referred to as `local data privacy' across multiple fields such as data mining~\cite{evfimievski2003limiting}, security~\cite{kasiviswanathan2011can}, statistics~\cite{duchi2018minimax}, and information theory~\cite{du2012privacy,sun2016towards}. 
For example, a popular framework is the local differential privacy~\cite{evfimievski2003limiting,dwork2006calibrating,kasiviswanathan2011can}, where raw data $X$ is suitably randomized (often by adding Laplace noises) into $Y$ so that the ratio of conditional densities 
\begin{align}
    e^{-\alpha} \leq \frac{p_{Y \mid X}(y \mid x_1)}{p_{Y \mid X}(y \mid x_2)} \leq e^{\alpha} \label{eq26}
\end{align}
for any $y,x_1,x_2 \in \XX$, where $\alpha > 0$ is a pre-determined value that quantities the level of privacy.
In the above, $ X $ and $ Y $ represent the private data and the processed data to be collected or publicly distributed. The requirement (\ref{eq26}) guarantees that the KL-divergence between $p_{Y \mid x_1}$ and $p_{Y \mid x_2}$ is universally upper-bounded by a known function of $\alpha$~(see, e.g., \cite{duchi2018minimax}), meaning that $x_1$ and $x_2$ are barely distinguishable from the observed $y$. Note that the above comparison is made between two conditional distributions, while the comparison in information laundering (recall the first term in (\ref{eq_IL})) is made between two transition kernels. 

The local differential privacy framework does not need to specify a probability space for $X$, since the notion of data privacy is only built on conditional distributions. 
Another related framework is the information privacy~\cite{du2012privacy}, which assumes a probabilistic structure on $X$ and a Markov chain $X \rightarrow \Y \rightarrow Y$. In the above chain, $X$ is the private raw data, $\Y$ is a set of measurement points to transmit or publicize, and $Y$ is a distortion of $\Y$ that is eventually collected or publicized. 
We deliberately chose the above notation of $X,\Y, Y$, so that the Markov chain appears similar to the special case of information laundering in Subsection~\ref{subsec_IL_Y}.
Nevertheless, the objective of information privacy is to minimize $\I(X; Y)$ over $p_{Y \mid \Y}$ subject to utility constraints, assuming that the joint distribution of $X,\Y$ is known. In other words, the goal is to maximally hide the information of $X$.
In the context of information laundering, the system input $X$ is provided by users and is known.


\textbf{Adversarial Model Attack: rendering harm instead of utility to a model}.
The adversarial model attack literature concerns the adversarial use of specially crafted input data to cause a machine learning model, often a deep neural network, to malfunction~\cite{papernot2016transferability,narodytska2017simple,papernot2017practical}. 
For example, an adversarial may inject noise into an image so that a well-trained classifier produces an unexpected output, even if the noise is perceptually close to the original one.
A standard attack is the so-called (Adaptive) Black-Box Attack against classifiers hosted by a model owner, e.g., Amazon and Google~\cite{rosenberg2017generic,chakraborty2018adversarial}. For a target model $\C$, a black-box adversary has no information about the training process of $\C$ but can access the target model through query-response interfaces. The adversary issues (adaptive) queries and record the returned labels to train a local surrogate model. The surrogate model is then used to craft adversarial samples to maximize the target model's prediction error.

If we let $X,\X, Y$ denote the model input, adversarially perturbed input, and output, respectively, then we may draw a similarity between adversarial model attack and the particular case of information laundering in  Subsection~\ref{subsec_IL_X} since they both look for the law $X \rightarrow \X$. The main difference is in the objective. While the model attack aims to find an input domain that maximally distorts the model, information laundering aims to maintain a small model discrepancy. 
Under our notation, a possible formulation for the model attack is to seek 
$
\max_{p_{\X\mid X}} 
\E_{X \sim p_{X}} \D(\p(\cdot \mid X), \p(\cdot \mid \X))
$ 
under a constraint of $p_{\X \mid X}$.

\section{Special Case: Information laundering of the output ($Y$) only} \label{subsec_IL_Y}

\begin{figure}[tb]
\begin{center}
\centerline{\includegraphics[width=1\columnwidth]{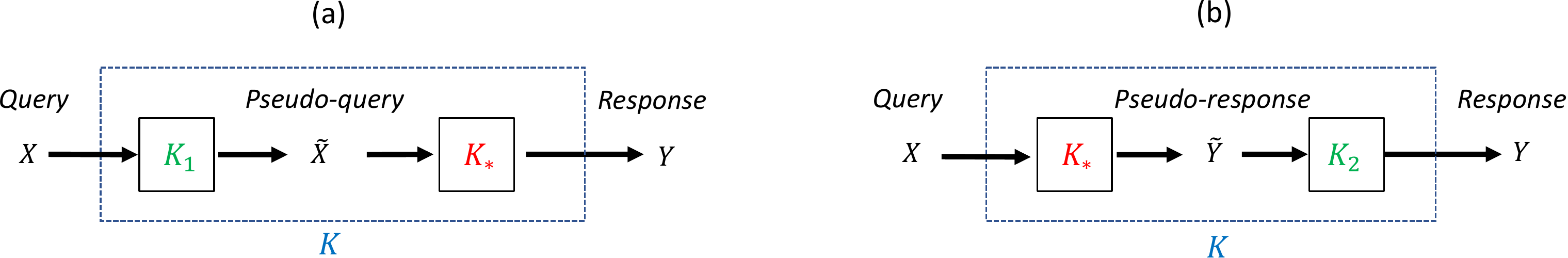}}
\vskip -0.15in
\caption{Illustration of  Alice's information-laundered system for public use, by (a) alternating input only, and (b) alternating output only. The notations are similar to those in Figure~\ref{fig_system}. }
\label{fig_system_special}
\end{center}
\end{figure}

Two special cases of an information-laundered system are illustrated in Figure~\ref{fig_system_special}.
Here, we elaborate on one case and include the other special case in the Appendix.
Suppose that $\A$ is an identity map and let $\beta_1=0$.
In other words, we alter the output data only (Figure~\ref{fig_system_special}b).
Then the optimization problem (\ref{eq_IL}) reduces to minimizing 	
\begin{align}
    &L(\pB) 
    \de \E_{X \sim p_x} \D(\p(\cdot \mid X), \pC(\cdot \mid X)) + \beta_2 \I(Y; \Y) . \label{eq_IL_B} 
\end{align}

\begin{corollary}\label{corr_output}
    The solution to the optimization problem (\ref{eq_IL_B}) satisfies
    \begin{align}
    \pB(y \mid \y)
        &= \tau_{\y} p_{Y}(y) \exp \biggl\{ \frac{1}{\beta_2 p_{\Y}(\y)} \E_{X \sim p_X } \frac{\p(y \mid X) \cdot p_{\C}(\y \mid X)}{\pC(y \mid X)} \biggr\} ,\label{eq_Y} 
    \end{align}
    where $\tau_{\y}$ is a normalizing constant.
    In particular, if $\C$ is deterministic, equation (\ref{eq_Y}) becomes
    \begin{align}
    \pB(y \mid \y)
        &= \tau_{\y} p_{Y}(y) \exp \biggl\{ \frac{1}{\beta_2 p_{\Y}(\y)} \sum_{x : f(x) = y} p_X(x) \frac{\ind_{y = \y}}{\pC(y \mid x)} \biggr\} 
        \nonumber \\
        &= \tau_{\y} p_{Y}(y) \exp \biggl\{ \frac{\ind_{y = \y}}{\beta_2 \, \pB(y \mid y)} \biggr\}
        \label{eq_Y2} 
    \end{align}
\end{corollary}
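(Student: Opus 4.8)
\emph{Proof plan.} The plan is to obtain (\ref{eq_Y}) by applying the calculus of variations directly to the reduced functional (\ref{eq_IL_B}); equivalently, one may simply specialize Theorem~\ref{thm_main} by taking $\A$ to be the identity kernel (so that $\X=X$, the law of $\Y\mid X$ is $\p(\cdot\mid X)$, and $p_{\A\circ\C}(\y\mid x)=\p(\y\mid x)$) together with $\beta_1=0$, which collapses (\ref{eq2}) to (\ref{eq_Y}). I would present the self-contained route, since fixing $\A$ to the identity makes the marginal $p_{\Y}(\y)=\sum_x p_X(x)\,\p(\y\mid x)$ depend only on $p_X$ and $\p$ and not on the optimization variable, which removes several terms that appear in the general derivation.

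First I would write $L(\pB)$ as an explicit function of the entries $\{\pB(y\mid\y)\}$: the cascade gives $\pC(y\mid x)=\sum_{\y}\p(\y\mid x)\,\pB(y\mid\y)$, the output marginal is $p_{Y}(y)=\sum_{\y}p_{\Y}(\y)\,\pB(y\mid\y)$, and $\I(Y;\Y)=\sum_{\y,y}p_{\Y}(\y)\,\pB(y\mid\y)\log\{\pB(y\mid\y)/p_{Y}(y)\}$, while the $\log\p(y\mid x)$ piece of the KL term is a constant in $\pB$ and can be discarded. Then I would introduce Lagrange multipliers $\lambda_{\y}$ for the normalization constraints $\sum_y\pB(y\mid\y)=1$ and set $\partial/\partial\pB(y\mid\y)$ of the Lagrangian to zero. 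The one point requiring care is differentiating through the implicit dependence of $\pC$ and $p_{Y}$ on $\pB$, via $\partial\pC(y'\mid x)/\partial\pB(y\mid\y)=\p(\y\mid x)\,\ind_{y'=y}$ and $\partial p_{Y}(y')/\partial\pB(y\mid\y)=p_{\Y}(\y)\,\ind_{y'=y}$; a short computation shows the mutual-information term contributes $p_{\Y}(\y)\log\{\pB(y\mid\y)/p_{Y}(y)\}$ and the KL term contributes $-\sum_x p_X(x)\,\p(y\mid x)\,\p(\y\mid x)/\pC(y\mid x)$. Solving the stationarity equation for $\pB(y\mid\y)$ and folding $\exp\{\lambda_{\y}/(\beta_2 p_{\Y}(\y))\}$ into the constant $\tau_{\y}$ yields (\ref{eq_Y}). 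Because $p_{\Y}$ is fixed, $\D(\p(\cdot\mid x),\pC(\cdot\mid x))$ is convex in $\pB$ (KL is convex in its second argument, which is affine in $\pB$) and $\I(Y;\Y)$ is convex in the channel $\pB$ for a fixed input law, so $L$ is convex and the stationary point is in fact the global minimum.

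For the deterministic sub-case I would substitute $\p(\y\mid x)=\ind_{\y=f(x)}$ into (\ref{eq_Y}). Then $\p(y\mid X)\,\p(\y\mid X)=\ind_{y=f(X)}\,\ind_{\y=f(X)}$, which restricts the expectation over $X$ to $\{x:f(x)=y\}$ and forces $\y=y$, so this factor becomes $\ind_{y=\y}\sum_{x:f(x)=y}p_X(x)/\pC(y\mid x)$, i.e., the first line of (\ref{eq_Y2}). To reach the second line I would use two observations: (i) for every $x$ with $f(x)=y$ the cascade collapses to $\pC(y\mid x)=\sum_{\y'}\ind_{\y'=f(x)}\pB(y\mid\y')=\pB(y\mid y)$, which factors out of the sum; and (ii) $\sum_{x:f(x)=y}p_X(x)=\Pr(f(X)=y)=\Pr(\Y=y)=p_{\Y}(y)$, since $\Y=f(X)$ when $\A$ is the identity and $\C$ is deterministic. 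The indicator $\ind_{y=\y}$ then makes $p_{\Y}(\y)=p_{\Y}(y)$ on its support, so the whole coefficient $\frac{1}{\beta_2 p_{\Y}(\y)}\sum_{x:f(x)=y}p_X(x)/\pC(y\mid x)$ reduces to $\ind_{y=\y}/(\beta_2\,\pB(y\mid y))$, which is the stated expression.

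I expect the main obstacle to be bookkeeping rather than anything conceptual. In the variational step one has to track the implicit dependence of $\pC$ and $p_{Y}$ on $\pB$ carefully so that the mutual-information derivative telescopes cleanly into a single log-ratio; in the deterministic sub-case one has to chase the two indicator functions to see simultaneously the restriction of the $X$-sum and the appearance of $\ind_{y=\y}$, and to recognize $p_{\Y}$ as the pushforward of $p_X$ under $f$. I would also note that, like the fixed-point equations in Theorem~\ref{thm_main}, the formula (\ref{eq_Y2}) is implicit, since $\pB(y\mid y)$ appears on both sides, and that this is expected rather than a defect of the derivation.
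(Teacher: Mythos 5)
Your proposal is correct and follows essentially the same route as the paper: the corollary is obtained by specializing Theorem~\ref{thm_main} (equivalently, redoing its Lagrangian/variational computation with $\A$ the identity and $\beta_1=0$, so that $p_{\A\circ\C}=\p$ and $p_{\Y}$ is fixed), and your indicator-chasing for the deterministic sub-case, including the identifications $\pC(y\mid x)=\pB(y\mid y)$ on $\{x:f(x)=y\}$ and $\sum_{x:f(x)=y}p_X(x)=p_{\Y}(y)$, matches the intended derivation of (\ref{eq_Y2}). Your added convexity observation (KL convex in its second, affine-in-$\pB$ argument plus convexity of $\I(Y;\Y)$ in the channel for fixed $p_{\Y}$) is a clean bonus that the paper only establishes later, via a Hessian computation, in Theorem~\ref{thm_specialcase_Converge}.
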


To exemplify the proposed methodology, we study a specific case with the following conditions. \\
\indent 
1) $\XX$ may be large or continuously-valued, $\tilde{\YY}=\YY$ is a moderately-large alphabet, \\
\indent
2) $\tilde{\YY}=\YY$ so that $\Y$ and $Y$ are in the same space,\\
\indent
3) $\C$ is deterministic. 
 
Under the above scenario, we can apply Algorithm~\ref{algo_general} and Corollary~\ref{corr_output} to obtain a simplified procedure below (denoted by OIL-Y).
At each time step $t=1,2,\ldots,$,  for each $\y,y \in \YY$, we calculate
    \begin{align}
    \pB^{(t+1)}(y \mid \y)
        &= \tau_{\y} p_{Y}^{(t)}(y) \exp \biggl\{ \frac{\ind_{y = \y}}{\beta_2 \, \pB^{(t)}(y \mid y)} \biggr\} ,\, 
        \textrm{where } \tau_{\y}^{-1} = \sum_{y} p_{Y}^{(t)}(y) \exp \biggl\{ \frac{\ind_{y = \y}}{\beta_2 \, \pB^{(t)}(y \mid y)} \biggr\}, \nonumber \\ 
        p_{Y}^{(t+1)}(y) 
        &= r_{\y} \pB^{(t+1)}(y \mid \y), \, \textrm{where } r_{\y}=\sum_{x: f(x) = \y} p_{X}(x) . \label{eq20}
    \end{align}
Note that the above $r_{\y}$ is the probability that Alice observes $\y$ as an output of $\C*$ if Bob inputs $X \in p_{X}$. Therefore, $r_{\y}$ can be easily estimated to be the empirical frequency of observing $\y$ at the end of Alice. 

Note that since $\YY$ is a finite alphabet, we can use a matrix representation for easy implementation. 
In particular, we represent the elements of $\YY$ by $1,\ldots,a$, where $a = \card(\YY)$. 
We then represent $\pB$ by $\bm P \in \R^{a \times a}$, and $p_{Y}$ by $\bm q \in \R^{a}$, where $P_{y,\y}=\pB(y\mid \y)$.  
Such a representation will lead to a matrix form of the above procedure, summarized in Algorithm~\ref{algo_Y}.

\begin{algorithm*}[tbh]
\vspace{-0.0 cm}
\small
\caption{OIL-Y (a special case of Algorithm~\ref{algo_general}, in the matrix form)}
\label{algo_Y}
\begin{algorithmic}[1]
\INPUT Input distribution $p_X$, private model $\p$
\OUTPUT Transition kernels $\pB: \YY \times \YY \rightarrow [0,1]$ represented by $\bm P \in \R^{a\times a}$, where $a = \card(\YY)$
\STATE Estimate $\bm r=[r_1,\ldots, r_a]$ from $p_X$ and $\p$ as in equation~(\ref{eq20}) 
\STATE Initialize the entries of $\bm P^{(0)}$ and  $\bm q^{(0)}$ (respectively representing $\pB,p_{Y}$) to be $1/a$
\FOR{$t = 0 \to T-1$} 
\STATE Calculate $\bm P^{(t+1)} = \bm q^{(t)} \times \bm 1^T, \textrm{diag}(\bm P)$, where $\bm 1 = [1,\ldots,1]$ denote the $a\times 1$ vector.
\STATE Update $\textrm{diag}(\bm P^{(t+1)}) \leftarrow \textrm{diag}(\bm P^{(t+1)}) \cdot \exp\{ 1 / (\beta_2 \, \textrm{diag}(\bm P^{(t)}) \}$, where the operations are element-wise
\STATE Scale each column (conditional distribution) of $\bm P^{(t+1)}$ so that it sums to one
\STATE Calculate $\bm q^{(t+1)} = \bm P^{(t+1)} \times \bm r$
\ENDFOR
\STATE Return $\pB^{(T)}$ that is represented by $\bm P^{(T)}$. 
\end{algorithmic}
\vspace{-0.0cm}%
\end{algorithm*}

Moreover, we proved the convergence to the global minimum for the alternating equations in the above scenario. The same technique can be emulated to show a similar result when we employ $\A$ (instead of $\B$) only.
The result is summarized in Theorem~\ref{thm_specialcase_Converge}.

\begin{theorem} \label{thm_specialcase_Converge} 
     Suppose that $\C$ is deterministic. The alternating equation (\ref{eq20}), or its matrix form in Algorithm~\ref{algo_general}, converges to a global minimum of the problem (\ref{eq_IL_B}).
\end{theorem}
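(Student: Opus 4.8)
The plan is to show that, in this special case, minimizing (\ref{eq_IL_B}) is a \emph{convex} program in the single variable $\pB$, so that the stationarity equations already obtained in Corollary~\ref{corr_output} characterize \emph{global} minimizers; combining this with convergence of the iteration to a point obeying those equations then gives the claim. Convexity is the crucial extra ingredient: the full objective (\ref{eq_IL}) is not jointly convex in $(\pA,\pB)$ — which is why Theorem~\ref{thm_converge} only asserts convergence to a possibly local minimum — but with $\beta_1=0$ and $\A$ fixed to the identity the one remaining variable, $\pB$, enters convexly.

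\emph{Reduction and convexity.} With $\A=\mathrm{id}$, $\tilde\YY=\YY$, and $\C$ deterministic given by $f$, we have $\Y=f(X)$, $\pC(y\mid x)=\pB(y\mid f(x))$, and $\p(\cdot\mid x)$ a point mass at $f(x)$, so $\D(\p(\cdot\mid x),\pC(\cdot\mid x))=-\log\pB(f(x)\mid f(x))$ and
\[
L(\pB)=-\sum_{\y}r_{\y}\log\pB(\y\mid\y)+\beta_2\,\I(Y;\Y),
\]
with $\Y\sim\bm r$ and $Y\mid\Y\sim\pB$ — exactly the computation behind (\ref{eq20}). The first term is a nonnegative combination of maps $\pB\mapsto-\log\pB(\y\mid\y)$, i.e.\ $-\log$ of a linear functional, hence convex; since the law $\bm r$ of $\Y$ does not depend on $\pB$, the second term is $\beta_2$ times a mutual information viewed as a function of the channel $\pB$ for a fixed input distribution, which is a classical convex functional. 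The feasible set (a product of probability simplices on $\YY$) is convex, so (\ref{eq_IL_B}) is a convex problem; consequently every stationary (KKT) point is a global minimizer, and by Corollary~\ref{corr_output} the stationary points are precisely the feasible $\pB$ satisfying (\ref{eq_Y2}).

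\emph{Convergence to a stationary point.} The iteration (\ref{eq20})/Algorithm~\ref{algo_Y} is the specialization of Algorithm~\ref{algo_general}, so Theorem~\ref{thm_converge} already furnishes convergence to a $\pB^{\star}$ obeying (\ref{eq_Y2}); I would nonetheless re-run the argument in this one-variable setting, where it is transparent. Introduce $J(\pB,\bm q)=-\sum_{\y}r_{\y}\log\pB(\y\mid\y)+\beta_2\sum_{\y,y}r_{\y}\pB(y\mid\y)\log\frac{\pB(y\mid\y)}{q(y)}$, so that $L(\pB)=\min_{\bm q}J(\pB,\bm q)$ with the minimum attained at the induced marginal $p_Y$. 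The $\bm q$-update in (\ref{eq20}) is this exact minimization; the $\bm P$-update applies one step of the natural fixed-point iteration for the (implicitly defined) minimizer of $J(\cdot,\bm q^{(t)})$ over $\pB$, and its fixed points coincide with that minimizer, which — evaluated at $\bm q=p_Y$ — is exactly (\ref{eq_Y2}). Monotonicity of $L$ along the iterates together with $L\ge0$ gives convergence of the objective by the monotone convergence theorem; since the iterates lie in a compact product of simplices (and stay in its relative interior once we discard $y\notin f(\XX)$ so that $\bm r$ has full support, keeping every $\tau_{\y}$ finite), a convergent subsequence exists, its limit is a fixed point of the continuous update map, hence satisfies (\ref{eq_Y2}), hence by the previous paragraph is a global minimizer.

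\emph{Main obstacle.} The delicate step is the convergence argument: because the $\bm P$-update in (\ref{eq20}) reuses the previous iterate $\pB^{(t)}(\y\mid\y)$ in the exponent, it is a single fixed-point step rather than an exact inner minimization, so ``$L$ non-increasing'' is not immediate and must be routed through the auxiliary functional $J$ (the one-variable shadow of the four-variable functional used for Theorem~\ref{thm_converge}); moreover one must upgrade convergence of the \emph{objective values} to convergence of the \emph{iterates} to an actual minimizer, which needs continuity/closedness of the update map, strict decrease off the set of fixed points, and the identification of those fixed points with the stationary points of $L$. Once convergence to a point satisfying (\ref{eq_Y2}) is secured, convexity does the rest.
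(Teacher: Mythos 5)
Your argument is sound and reaches the paper's conclusion, but the way you certify global optimality is genuinely different from the paper's. You reduce the objective to $L(\pB) = -\sum_{\y} r_{\y}\log\pB(\y\mid\y) + \beta_2 \I(Y;\Y)$ and invoke two standard facts --- convexity of $-\log$ of a linear functional, and convexity of mutual information in the channel for a fixed input law (here $\bm r$, which indeed does not depend on $\pB$ since $\A$ is the identity and $\C$ is fixed) --- to conclude that (\ref{eq_IL_B}) is a convex program in the single variable $\pB$, so that the stationarity condition (\ref{eq_Y2}) characterizes global minimizers. The paper instead keeps the auxiliary variable $h_2$ from the double-minimization representation $L(\pB)=\min_{h_2}J(\pB,h_2)$ and certifies joint convexity of $J$ in $(\pB,h_2)$ by computing second derivatives and checking that the determinant of the $2\times 2$ Hessian block in each pair $(\pB(y\mid\y),h_2(y))$ is nonnegative. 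Your route buys a cleaner and arguably more complete certificate: the channel-convexity of mutual information handles all coordinates of $\pB$ simultaneously, whereas the paper's pairwise $2\times2$ minors do not by themselves establish positive semidefiniteness of the full Hessian. The paper's route stays closer to the Blahut--Arimoto machinery already set up for Theorem~\ref{thm_converge} and speaks directly to the alternating scheme. For the convergence half, both you and the paper lean on the monotone-decrease argument of Theorem~\ref{thm_converge}; your closing remarks correctly identify the two points the paper glosses over --- the $\bm P$-update in (\ref{eq20}) is an implicit fixed-point step rather than an exact inner minimization, and convergence of objective values must still be upgraded to convergence of the iterates to a stationary point --- so your added care there (compactness, continuity of the update map, full support of $\bm r$ to keep iterates in the relative interior) is a strengthening rather than a detour.
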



\section{Conclusion and Further Remarks} \label{sec_con}
Despite extensive studies on data privacy, little has been studied for enhancing model privacy. 
Motivated by the emerging concern of model privacy from the perspective of machine learning service providers, we develop a novel methodology to enhance the privacy of any given model of interest. 
We believe that the developed principles, theories, and insights can lead to new resilient machine learning algorithms and services.
Interesting future work includes application studies on a case-by-case basis that are built upon the developed principle.
Theoretically, there are three open problems left from the work that deserves further research.
First, 
how does the imposed constraint of mutual information affect the rate of convergence from the adversary perspective for specific models (e.g., generalized linear models, decision trees, neural networks)?
Second, we assumed finite alphabets for technical convenience. How to emulate our current technical machinery to analyze continuously-valued alphabets? 
Third, what would be the relative importance of laundering $ X $ versus $ Y $, and will this depend on specific learning problems?

\textbf{Appendix}.
In Appendices~\ref{subsec_IL_fixed} and~\ref{subsec_IL_X}, we first include two particular cases of information laundering that were not included in the main part of the paper. 
We then include the proofs of the theorems in Appendix~\ref{subsec_proofs}.
Experimental results are included in Appendices~\ref{subsec_exp},~\ref{subsec_data1}, and~\ref{subsec_data2} to demonstrate the algorithm convergence, model privacy-utility tradeoffs, and how tradeoff parameters and unbalanced samples may influence the optimized information laundering.  

\balance
\bibliography{privacy,J}
\bibliographystyle{IEEEtran}
\clearpage

\appendix
\section{Appendix}

\subsection{Special cases: deterministic model $\C$} \label{subsec_IL_fixed}

Suppose that for each given $\x$, the conditional distribution $\pC(\cdot \mid \x)$ assigns all the mass at $\y$. In other words, $\C$ reduces to a deterministic function mapping each $\x \in \XX$ to a unique $\y \in \YY$, which is denoted by $\y=\f(\x)$.
For example, Alice's model is a classifier that takes input features and returns hard-thresholded classification labels. 
In this case, Theorem~\ref{thm_main} implies the following corollary. 
We will use this result in later sections. 

\begin{corollary} \label{corr_determ}
    The optimal solution of (\ref{eq_IL}) satisfies the following equations.
    \begin{align}
        \pA(\x \mid x)
        &= \kappa_x p_{\X}(\x) \exp\biggl\{ \frac{1}{\beta_1}  \frac{p_{\B}(\f(x) \mid \f(\x))}{\sum_{\x'} p_{\B}(\f(x) \mid \f(\x')) \pA(\x' \mid x)} \nonumber\\
        &\quad - \frac{\beta_2}{\beta_1} \E_{Y \mid \Y=\f(\x)} \log \frac{\pB(Y \mid \f(\x))}{p_Y(Y)} \biggr\} , \nonumber \\
        \pB(y \mid \y)
        &= \tau_{\y} p_{Y}(y) \exp \biggl\{ \frac{1}{\beta_2 p_{\Y}(\y)} \sum_{x : f(x) = y} p_X(x) \frac{p_{\A \circ \C}(\y \mid x)}{\pC(y \mid x)} \biggr\} , \nonumber                         
    \end{align}
    where $\kappa_x$ and $\tau_{\y}$ are normalizing constants implicitly defined so that the conditional density function integrates to one.
\end{corollary}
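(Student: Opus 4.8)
The plan is to derive Corollary~\ref{corr_determ} as a direct specialization of Theorem~\ref{thm_main}; no new variational computation is required, only a substitution. When $\C$ is deterministic with $\y=\f(\x)$, the conditional $\p(\cdot\mid x)$ is the point mass at $\f(x)$, i.e.\ $\p(y\mid x)=\ind_{y=\f(x)}$, and it suffices to propagate this fact through equations~(\ref{eq1}) and~(\ref{eq2}) and simplify each of the expectations that appear.

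First I would dispatch the easier equation~(\ref{eq2}). Replacing $\p(y\mid X)$ by $\ind_{y=\f(X)}$ collapses the expectation $\E_{X\sim p_X}$ onto the preimage set $\{x:\f(x)=y\}$, while the remaining factors $p_{\A\circ\C}(\y\mid X)$, $\pC(y\mid X)$, $p_Y(y)$, and $p_{\Y}(\y)$ are untouched; this immediately yields the stated formula for $\pB(y\mid\y)$.

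For equation~(\ref{eq1}) there are two expectations to treat. In the first bracketed term, since $\p(\cdot\mid x)$ is a point mass at $\f(x)$, the expectation $\E_{Y\mid X=x\sim\p}$ reduces to evaluation at $Y=\f(x)$, leaving $p_{\C\circ\B}(\f(x)\mid\x)/\pC(\f(x)\mid x)$. I would then unfold the cascade-kernel definitions of Subsection~\ref{subsec_notation}: because $\C$ is deterministic, $p_{\C\circ\B}(y\mid\x)=\sum_{\y}\p(\y\mid\x)\,\pB(y\mid\y)=\pB(y\mid\f(\x))$, and the effective kernel satisfies $\pC(y\mid x)=\sum_{\x'}\pA(\x'\mid x)\,p_{\C\circ\B}(y\mid\x')=\sum_{\x'}\pA(\x'\mid x)\,\pB(y\mid\f(\x'))$; evaluating both at $y=\f(x)$ produces exactly the first bracketed term in the claimed expression for $\pA$. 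For the second bracketed term, the joint law of $(\Y,Y)$ given $\X=\x$ factors as $\Y=\f(\x)$ deterministically (via $\C$) followed by $Y\sim\pB(\cdot\mid\Y)$, so $\E_{\Y,Y\mid\X=\x}$ collapses to $\E_{Y\mid\Y=\f(\x)}$, giving the remaining term.

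The main obstacle, such as it is, is purely bookkeeping: keeping the three cascade kernels $p_{\A\circ\C}$, $p_{\C\circ\B}$, and $\pC$ straight under the collapse induced by the point masses, and observing that the denominators $\pC(\cdot\mid\cdot)$ appearing in the exponents stay positive on the support where the formulas are evaluated. One should also record the implicit convention that $\f$ is defined consistently on $\XX$ and $\tilde{\XX}$, so that both $\f(x)$ and $\f(\x)$ make sense. I do not anticipate a genuine difficulty: all the mathematical content lies in Theorem~\ref{thm_main}, and this corollary follows by substitution.
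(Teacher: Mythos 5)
Your proposal is correct and matches the paper's (implicit) argument: the paper states Corollary~\ref{corr_determ} as a direct specialization of Theorem~\ref{thm_main} to a deterministic $\C$, with no separate proof given, and your substitution $\p(y\mid x)=\ind_{y=\f(x)}$ together with the collapse of the cascade kernels $p_{\C\circ\B}(y\mid\x)=\pB(y\mid\f(\x))$ and $\pC(y\mid x)=\sum_{\x'}\pA(\x'\mid x)\pB(y\mid\f(\x'))$ reproduces exactly the stated equations. No gap.
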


\subsection{Information laundering of the input ($X$) only} \label{subsec_IL_X}

Suppose that $\B$ is an identity map and let $\beta_2=0$ so that we only maneuver the input data (Figure~\ref{fig_system_special}a). 
Then the optimization problem (\ref{eq_IL}) reduces to minimizing
\begin{align}
    &L(\pA) 
    \de \E_{X \sim p_x} \D(\p(\cdot \mid X), \pC(\cdot \mid X)) + \beta_1 \I(X; \X) . \label{eq_IL_A} 
 \end{align}

\begin{corollary} \label{corr_input}
    The optimal solution of (\ref{eq_IL_A}) satisfies the following equations.
    \begin{align}
        \pA(\x \mid x)
        &= \kappa_x p_{\X}(\x) \exp\biggl\{ \frac{1}{\beta_1} \E_{Y \mid X=x \sim \p} \frac{p_{\C}(Y \mid \X=\x)}{\pC(Y \mid X=x)} \biggr\} , \label{eq_X} 
    \end{align}
    where $\kappa_x$ is an implicitly defined normalizing constant.
    In particular, if $\C$ is deterministic, equation (\ref{eq_X}) becomes
    \begin{align}
        \pA(\x \mid x)
        &= \kappa_x p_{\X}(\x) \exp\biggl\{ \frac{\ind_{\f(x) = \f(\x)}}{\beta_1 \sum_{\x': \f(x) = \f(\x')} \pA(\x' \mid x) } \biggr\} .
        \label{eq_X2} 
    \end{align}
\end{corollary}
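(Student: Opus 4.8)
The plan is to obtain (\ref{eq_X}) as a direct specialization of Theorem~\ref{thm_main} and then to evaluate it under a deterministic authentic model to reach (\ref{eq_X2}). The reduced problem (\ref{eq_IL_A}) is exactly (\ref{eq_IL}) with the output kernel $\B$ frozen to the identity and $\beta_2$ set to $0$, so I first argue that the $\pA$-equation (\ref{eq1}) still applies. Equation (\ref{eq1}) is the partial stationarity condition obtained by differentiating $L$ with respect to $\pA(\cdot\mid\cdot)$ while holding $\pB$ fixed, subject to $\sum_{\x}\pA(\x\mid x)=1$; freezing $\pB$ to the identity rather than optimizing it does not alter this partial condition, it merely fixes the value of $\pB$ that gets substituted. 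Setting $\beta_2=0$ annihilates the second term in the exponent of (\ref{eq1}), and an identity output kernel forces $Y=\Y$, hence $p_{\C\circ\B}(\cdot\mid\x)=p_{\C}(\cdot\mid\x)=\p(\cdot\mid\x)$. Substituting these two reductions into (\ref{eq1}) produces (\ref{eq_X}) verbatim, with $\kappa_x$ absorbing the Lagrange multiplier of the normalization constraint.

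To make the corollary self-contained and to verify the form independently, I would also re-derive (\ref{eq_X}) directly by the calculus of variations applied to (\ref{eq_IL_A}). Writing $\pC(y\mid x)=\sum_{\x}\pA(\x\mid x)\,\p(y\mid\x)$ and $p_{\X}(\x)=\sum_{x}p_X(x)\,\pA(\x\mid x)$, I form the Lagrangian $L(\pA)+\sum_x\lambda_x\bigl(\sum_{\x}\pA(\x\mid x)-1\bigr)$ and differentiate with respect to a single entry $\pA(\x\mid x)$. Through $\partial\pC(y\mid x)/\partial\pA(\x\mid x)=\p(y\mid\x)$, the divergence term contributes $-p_X(x)\,\E_{Y\mid x\sim\p}[\,\p(Y\mid\x)/\pC(Y\mid x)\,]$, while the mutual-information term contributes $\beta_1 p_X(x)\log\bigl(\pA(\x\mid x)/p_{\X}(\x)\bigr)$. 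Setting the derivative to zero, dividing by $\beta_1 p_X(x)$, and exponentiating recovers (\ref{eq_X}) with $\kappa_x=\exp\{-\lambda_x/(\beta_1 p_X(x))\}$ fixed by $\sum_{\x}\pA(\x\mid x)=1$.

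For the deterministic case, suppose $\p(\cdot\mid u)$ is the point mass at $\f(u)$. Then the outer expectation $\E_{Y\mid X=x\sim\p}$ evaluates its integrand only at $Y=\f(x)$; the numerator becomes $p_{\C}(\f(x)\mid\x)=\ind_{\f(x)=\f(\x)}$, and the denominator becomes $\pC(\f(x)\mid x)=\sum_{\x'}\pA(\x'\mid x)\,\ind_{\f(x)=\f(\x')}=\sum_{\x':\f(\x')=\f(x)}\pA(\x'\mid x)$. Inserting the resulting ratio into (\ref{eq_X}) gives (\ref{eq_X2}) at once.

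The computations are largely routine; the one genuinely delicate step is the functional derivative of $\I(X;\X)$ in the self-contained route, where one must track the implicit dependence of $p_{\X}$ on $\pA$. Here the explicit logarithmic term and the implicit contribution through $p_{\X}(\x)$ produce a $+1$ and a $-1$ that must cancel (the standard device behind the Blahut--Arimoto fixed-point form), so that the stationarity condition closes in the stated form rather than acquiring a spurious additive constant. A secondary point deserving one sentence is the well-definedness of the deterministic ratio: the denominator $\pC(\f(x)\mid x)$ is the probability that the laundered system returns $\f(x)$ for input $x$, and it is strictly positive at any finite-objective solution, since $\pC(\f(x)\mid x)=0$ together with $\p(\f(x)\mid x)=1$ would force $\D(\p(\cdot\mid x),\pC(\cdot\mid x))$, and hence $L$, to be infinite.
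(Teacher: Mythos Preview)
Your proposal is correct and follows the same approach the paper intends: the corollary is obtained by specializing Theorem~\ref{thm_main} with $\B$ set to the identity and $\beta_2=0$, which collapses $p_{\C\circ\B}$ to $p_{\C}$ and removes the second exponent term in~(\ref{eq1}); the deterministic case then follows by evaluating the expectation at $Y=f(x)$ exactly as you describe. Your additional self-contained variational derivation and the remarks on the $+1/-1$ cancellation in $\partial\I(X;\X)/\partial\pA$ and on the strict positivity of $\pC(f(x)\mid x)$ go beyond what the paper spells out, but they are accurate and only strengthen the write-up.
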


As we can see from Corollaries~\ref{corr_output} and \ref{corr_input}, for a deterministic $\C$ (represented by $f$),  the simplified equation of (\ref{eq_Y2}) is similar to that of (\ref{eq_X2}). The subtle difference that one has a sum while the other does not is because $f$ may not be a one-to-one mapping.

\subsection{Proofs} \label{subsec_proofs}

\begin{proof}[Proof of Theorem~\ref{thm_main}]
    Introducing Lagrange multipliers, $\lambda_1(x)$ for the normalization of the conditional distributions $\pA(\cdot \mid x)$ at each $x$, $\lambda_2(\y)$ for the normalization of the conditional distributions $\pB(\cdot \mid \y)$ at each $\y$.
    The Lagrangian of (\ref{eq_IL}) can be written as 
    \begin{align}
        L &= - \sum_{x,y} p_{X}(x) \p(y \mid x) \log \pC(y \mid x)
            + \beta_1 \sum_{x, \x} p_{X}(x) \pA(\x \mid x) \log \frac{\pA(\x \mid x)}{p_{\X}(\x)} \nonumber \\
           &\quad + \beta_2 \sum_{\y, y} p_{\Y}(\y) \pB(y \mid \y) \log \frac{\pB(y \mid \y)}{p_{Y}(y)}
           + \sum_{x} \lambda_1(x) \pA(\x \mid x) + \sum_{\y} \lambda_2(\y) \pB(y \mid \y) + c\nonumber \\
           &= A_1 + A_2 +  A_3 + A_4 + A_5 + c \label{eq6}
    \end{align}
    up to an additive constant $c$ that is determined by the known $p_X$ and $\p$.
    
    It can be verified that
    \begin{align}
        \frac{\partial \pC(y \mid x)}{\pA(\x \mid x)}
        &= p_{\C\circ \B}(y \mid \x) \label{eq7} \\
        \frac{\partial p_{\X}(\x)}{\pA(\x \mid x)}
        &= p_X(x) \label{eq8} \\
        \frac{\partial p_{\Y}(\y)}{\pA(\x \mid x)} 
        &= p_X(x) \p(\y \mid \x)  \label{eq9} \\
        \frac{\partial p_{Y}(y)}{\pA(\x \mid x)}
        &= p_X(x) p_{\C \circ \B}(y \mid \x) . \label{eq10}
    \end{align}
    
    Using (\ref{eq7})-(\ref{eq10}), for a given $x$ and $\x$, we calculate the derivatives of each term in (\ref{eq6}) with respect to $\pA(\x \mid x)$ to be
    \begin{align}
        \frac{\partial A_1}{\pA(\x \mid x)} 
        &= - p_X(x) \sum_{y} \p(y \mid x) \frac{p_{\C\circ \B}(y \mid \x)}{\pC(y \mid x)}  \label{eq11} \\
        \frac{\partial A_2}{\pA(\x \mid x)} 
        &= \beta_1 p_X(x) \log \frac{\pA(\x \mid x)}{p_{\X}(\x)} \label{eq12} \\
        \frac{\partial A_3}{\pA(\x \mid x)} 
        &= \beta_2 p_X(x) \sum_{\y,y} p_{\C \circ \B}(\y, y \mid \X = \x) \log \frac{\pB(y \mid \y)}{p_Y(y)} \nonumber \\
        &\quad -\beta_2 p_X(x) \sum_{\y,y} p_{\Y}(\y)\pB(y \mid \y) \frac{p_{\C\circ \B}(y \mid \x)}{p_Y(y)} \nonumber \\
        &=\beta_2 p_X(x) \sum_{\y,y} p_{\C \circ \B}(\y, y \mid \X = \x) \log \frac{\pB(y \mid \y)}{p_Y(y)}  - \beta_2 p_X(x)  \label{eq13} \\
        \frac{\partial A_4}{\pA(\x \mid x)} 
        &= \lambda_1(x) \label{eq14} \\
        \frac{\partial A_5}{\pA(\x \mid x)} 
        &= 0 \label{eq15} 
    \end{align}
    
    Taking equations (\ref{eq11})-(\ref{eq15}) into (\ref{eq6}), we obtain the first-order equation
    \begin{align}
        \frac{\partial L}{\partial \pA(\x \mid x)} 
        &= p_X(x)\biggl\{ - \E_{Y \mid X=x \sim \p} \frac{p_{\C\circ\B}(Y \mid \X=\x)}{\pC(Y \mid X=x)} 
        + \beta_1 \log \frac{\pA(\x \mid x)}{p_{\X}(\x)}\nonumber \\
        &\quad + \beta_2  \E_{\Y, Y \mid \X = \x} \log \frac{\pB(Y \mid \Y)}{p_Y(Y)}
        + \tilde{\lambda}_1(x) \biggr\}
        = 0 , \label{eq_3}
    \end{align}
    where $\tilde{\lambda}(x) =  \lambda_1(x)/p_X(x) - \beta_2$.
    Rearranging the terms in Equation (\ref{eq_3}), we obtain 
    \begin{align*}
        \log \frac{\pA(\x \mid x)}{p_{\X}(\x)} 
        = \frac{1}{\beta_1} \biggl\{- \tilde{\lambda}_1(x) + \E_{Y \mid X=x \sim \p} \frac{p_{\C\circ\B}(Y \mid \X=\x)}{\pC(Y \mid X=x)} 
        - \beta_2 \E_{\Y, Y \mid \X = \x} \log \frac{\pB(Y \mid \Y)}{p_Y(Y)} \biggr\}
    \end{align*}
    which implies Equation (\ref{eq1}).
    
    Similarly, taking derivatives with respect to $\pB(y \mid \y)$ for given $\y$ and $y$, it can be verified that 
    \begin{align}
        \frac{\partial \pC(y \mid x)}{\partial \pB(y \mid \y)}
        &= p_{\A \circ \C}(\y \mid x) \nonumber \\
         \frac{\partial L}{\partial \pB(y \mid \y)}  
        &= - \sum_{x} p_X(x) \p(y \mid x) \frac{p_{\A \circ \C}(\y \mid x)}{\pC(y \mid x)} + \beta_2 p_{\Y}(\y)  \log \frac{\pB(y \mid \y)}{p_{Y}(y)} + \lambda_2(\y)
        \nonumber \\
        &= - \E_{X \sim p_X } \frac{\p(y \mid X) \cdot p_{\A \circ \C}(\y \mid X)}{\pC(y \mid X)} + \beta_2 p_{\Y}(\y) \log \frac{\pB(y \mid \y)}{p_{Y}(y)} + \lambda_2(\y) . \label{eq_4}
    \end{align}
    Letting Equation (\ref{eq_4}) be zero and rearranging it, we obtain Equation (\ref{eq2}).
 \end{proof}
 
\begin{proof}[Proof of Theorem~\ref{thm_converge}]
    
    We define the following functional of four variables: $\pA,\pB,h_1,h_2$,
    \begin{align}
        J(\pA,\pB,h_1,h_2) &= - \sum_{x,y} p_{X}(x) \p(y \mid x) \log \pC(y \mid x)
            + \beta_1 \sum_{x, \x} p_{X}(x) \pA(\x \mid x) \log \frac{\pA(\x \mid x)}{h_1(\x)} \nonumber \\
           &\quad + \beta_2 \sum_{\y, y} p_{\Y}(\y) \pB(y \mid \y) \log \frac{\pB(y \mid \y)}{h_2(y)}.  \label{eq_J}
    \end{align}
    
    We will use the following known result~\cite[Lemma 10.8.1]{cover1999elements}. Suppose that $X$ and $Y$ have a joint distribution with density $p_{XY}$, and the marginal densities are $p_X,p_Y$, respectively. Then a density function $r_Y$ of $y$ that minimizes the KL-divergence $D(p_{XY}, p_X r_Y)$ is the marginal distribution $p_Y$. This result implies that minimizing the objective function in (\ref{eq_IL}) can be written as a quadruple minimization
    \begin{align}
        \min_{\pA,\pB,h_1,h_2} J(\pA,\pB,h_1,h_2) .
    \end{align}
    
     It can be verified from (\ref{eq_3}) and its preceding identities that 
    \begin{align}
        \frac{\partial^2 J}{\partial \pA(\x \mid x)^2} 
        &= p_X(x) \E_{Y \mid X=x \sim \p} \frac{p_{\C\circ\B}(Y \mid \x)}{\pC(Y \mid x)^2} \frac{\partial \pC(Y \mid x)}{\partial \pA(\x \mid x)}
        + \beta_1 \frac{p_{X}(x)}{\pA(\x \mid x)} \nonumber \\
        &= p_X(x) \E_{Y \mid X=x \sim \p} \frac{p_{\C\circ\B}(Y \mid \x)^2}{\pC(Y \mid x)^2} 
        + \beta_1 \frac{p_{X}(x)}{\pA(\x \mid x)} \\ 
        \frac{\partial^2 J}{\partial \pB(y \mid \y)^2} 
        &=\E_{X \sim p_X } \frac{\p(y \mid X) \cdot p_{\A \circ \C}(\y \mid X)}{\pC(y \mid X)^2} \frac{\partial \pC(y \mid X)}{\partial \pB(y \mid \y)} + \beta_2 \frac{ p_{\Y}(\y) }{\pB(y \mid \y)} \\
        &= \E_{X \sim p_X } \frac{\p(y \mid X) \cdot p_{\A \circ \C}(\y \mid X)^2}{\pC(y \mid X)^2} + \beta_2 \frac{ p_{\Y}(\y) }{\pB(y \mid \y)}\\
        %
        \frac{\partial^2 J}{\partial h_1(\x)^2} 
        &= \beta_1 \sum_{x} \frac{p_{X}(x) \pA(\x \mid x)}{h_1(\x)^2}  \\
        \frac{\partial^2 J}{\partial h_2(y)^2} 
        &= \beta_2 \sum_{\y} \frac{ p_{\Y}(\y) \pB(y \mid \y)}{h_2(y)^2}
    \end{align}
    Thus, $J(\pA,\pB,h_1,h_2)$ is convex in each of the variables.
    
    We begin with a choice of initial $\pB,h_1,h_2$, and calculate the $\pA$ that minimizes the objective. Using the method of Lagrange multipliers for this minimization (in a way similar to (\ref{eq6})), we obtain the solution of $\pA$ shown in the first equation of Line 3, Algorithm~\ref{algo_general}. 
    Similarly, we obtain the second equation in Algorithm~\ref{algo_general}.
    For the conditional distributions $\pA$ and $\pB$, we then calculate the marginal distributions $h_1$ (of $\x$) 
    that minimizes (\ref{eq_J}). Note that the terms of (\ref{eq_J}) involving $h_1$ may be rewritten as
    \begin{align}
        \beta_1 \sum_{x, \x} p(x , \x) \log \frac{p(x,\x)}{p(x) h_1(\x)} \nonumber 
    \end{align}
    which, by the aforementioned lemma, is minimized by the third equation of Line 3, Algorithm~\ref{algo_general}.
    Similar arguments apply for $h_2$. 
    Consequently, each iteration step in Algorithm~\ref{algo_general}  reduces $J$. By the non-negativeness  of KL-divergence, $J + c \geq L \geq 0$, where $L$ is in (\ref{eq_IL}) and $c$ is introduced in (\ref{eq6}). Therefore, $J$ has a lower bound, and the algorithm will converge to a minimum.
    Note that $J(\pA,\pB,h_1,h_2)$ is convex in each of the variables independently but not in the variables' product space. The current proof does not imply the convergence to a global minimum.
\end{proof}

\begin{proof}[Proof of Theorem~\ref{thm_specialcase_Converge}]
    Similar to the technique used in the above proof of Theorem~\ref{thm_converge}, we cast
    the optimization problem in (\ref{eq_IL_B}) as a double minimization
    \begin{align}
        \min_{\pB,h_2} J(\pB,h_2)\de - \sum_{x,y} p_{X}(x) \p(y \mid x) \log \pC(y \mid x)
            + \beta_2 \sum_{\y, y} p_{\Y}(\y) \pB(y \mid \y) \log \frac{\pB(y \mid \y)}{h_2(y)}. \nonumber
    \end{align}
    We only need to check that $J$ is strongly convex in its arguments. 
    Direct calculations show that
    \begin{align*}
        \frac{\partial^2 J}{\partial \pB(y \mid \y)^2}
        &= \sum_{x} p_X(x) p_{K_*}(y\mid x) \frac{\p^2(\y\mid x)}{\pC^2(y\mid x)} + \beta_2 \frac{p_{\Y}(\y)}{\pB(y \mid \y)} \\
        &= \sum_{x: f(x)=y, y=\y } p_X(x)  \frac{1}{\pC^2(y\mid x)} + \beta_2 \frac{p_{\Y}(\y)}{\pB(y \mid \y)} \\
        &= \frac{p_{\Y}(\y)}{\pB^2(y \mid \y)} \ind_{y = \y} + \beta_2 \frac{p_{\Y}(\y)}{\pB(y \mid \y)} \\
        \frac{\partial^2 J}{\partial h_2(y)^2} 
        &= \beta_2 \frac{ p_{Y}(y) }{h_2(y)^2} \\
        \frac{\partial^2 L}{\partial \pB(y \mid \y) \partial h_2(y)}
        &= - \beta_2 \frac{p_{\Y}(\y)}{h_2(y)}.
    \end{align*}
    The above equations indicate that the determinant of the Hessian satisfies 
    \begin{align*}
        &\frac{\partial^2 J}{\partial \pB(y \mid \y)^2} \cdot \frac{\partial^2 J}{\partial h_2(y)^2}  - \biggl\{  \frac{\partial^2 L}{\partial \pB(y \mid \y) \partial h_2(y)} \biggr\}^2
        \\
        &= \beta_2 \frac{p_{\Y}(\y) p_{Y}(y)}{\pB(y \mid \y) h_2(y)^2} \ind_{y = \y}
        + \beta_2^2 \frac{p_{\Y}(\y) p_{Y}(y)}{\pB(y \mid \y) h_2(y)^2} \biggl\{ 1-\frac{p_{\Y,Y}(\y,y)}{p_{Y}(y)} \biggr\} ,
    \end{align*}
    which further implies the convexity of $J$ in the product space of $\pB$ and $h_2$. 
\end{proof}

\subsection{Visualization of Algorithm~\ref{algo_Y}} \label{subsec_exp}

We provide a toy example to visualize Algorithm~\ref{algo_Y}. 
In the simulation, we choose an alphabet of size $100$, and $p_{\Y}$ as described by $\bm r \in [0,1]^a$ is uniform-randomly generated from the probability simplex. We independently replicate the experiment 50 times,  each time running Algorithm~\ref{algo_Y} for 30 iterations, and calculate the average of the following results. First, we record $\|\bm P^{(t+1)} - \bm P^{(t)}\|_1/a$ at each iteration $t$, which traces the convergence of the estimated transition probabilities. Second, we record the final transition probability matrix into a heat-map where $P_{y,\y}$ means the estimated $\pB(y \mid \y)$. The experiments are performed for $\beta=100,10,1$, corresponding to columns 1-3. The plots indicate the convergence of the algorithm, though the rate of convergence depends on $\beta$. They also imply the expected result that a small $\beta$ induces an identity transition while a large $\beta$ induces $\Y$ that is nearly independent with $Y$.
\begin{figure}[tb]
\begin{center}
\centerline{\includegraphics[width=1\columnwidth]{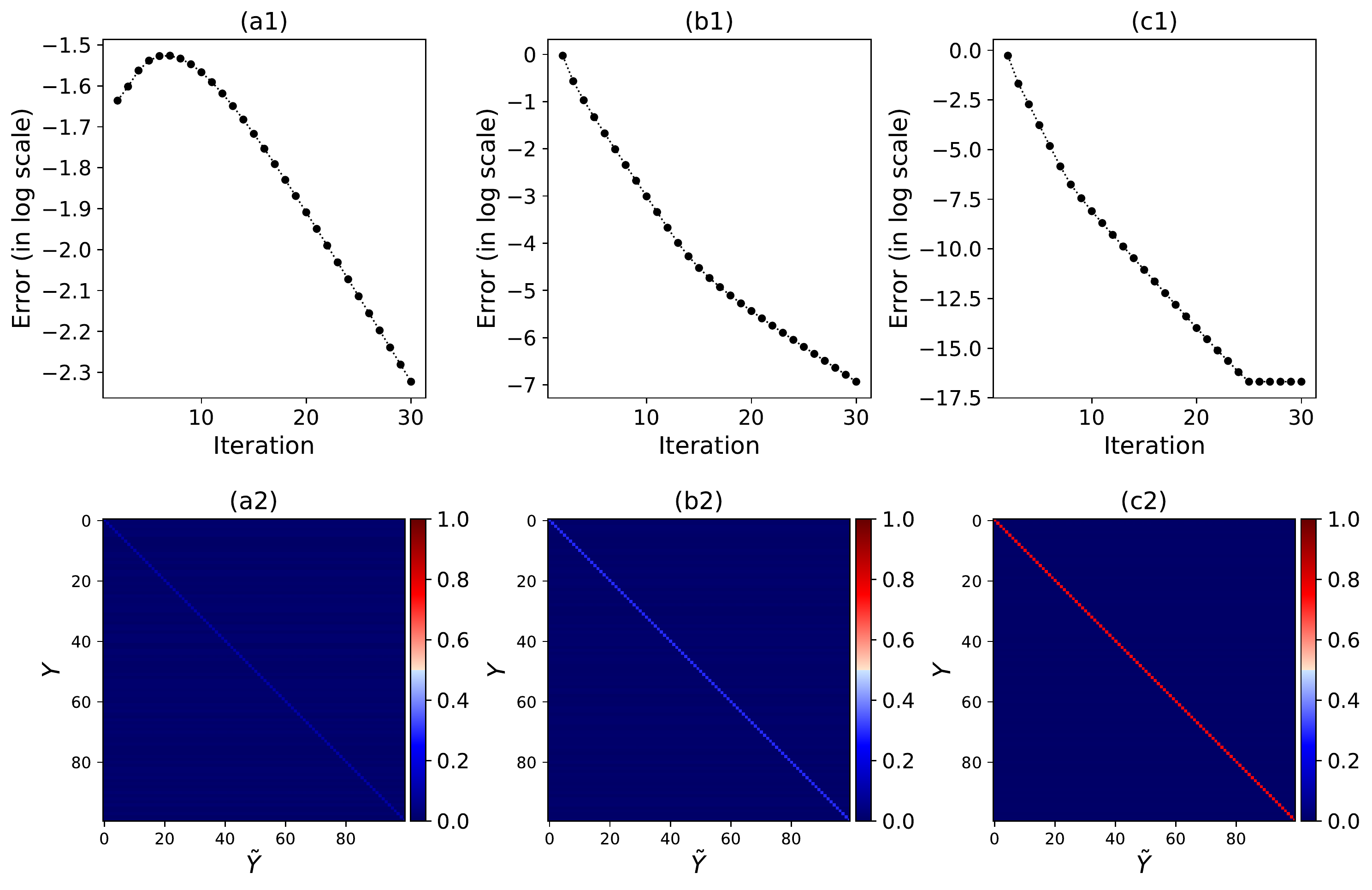}}
\caption{Visualization of Algorithm~\ref{algo_Y} in terms of the convergence (row 1) and the final transition probabilities (row 2), for $\beta=100,10,1$ (corresponding to three columns).}
\label{fig_visual}
\end{center}
\end{figure}

\subsection{Data study: News Text classification} \label{subsec_data1}

In this experimental study, we use the `20-newsgroups' dataset provided by \textit{scikit-learn} open-source library~\cite{newsData}, which comprises news texts on various topics. 
The experiment is intended to illustrate the utility-privacy tradeoff and the optimality of our proposed solution compared with other methods.
For better visualization we pick up the first four topics (in alphabetic order), which are `alt.atheism', `comp.graphics', `comp.os.ms-windows.misc', `comp.sys.ibm.pc.hardware'. 
Suppose that the service Alice provides is to perform text-based clustering, which takes text data as input and returns one of the four categories (denoted by $0,1,2,3$) as output. 
The texts are transformed into vectors of numerical values using the technique of term frequency-inverse document frequency (TF-IDF)~\cite{rajaraman2011mining}. 
In the transformation, metadata such as headers, signature blocks, and quotation blocks are removed.
To evaluate the out-sample utility, we split the data into two parts using the default option provided in~\cite{newsData}, which results in a training part ($2245$ samples, $49914$ features) and a testing part ($1494$ samples, $49914$ features).
The above split between the training and testing is based upon messages posted before and after a specific date.

Alice trains a classifier using the Naive Bayes method and records the frequency of observing each category $[0.22 0.27 0.21 0.30]$ ($\bm r$ in Algorithm~\ref{algo_Y}).
Then, Alice runs the OIL-Y Algorithm (under a given $\beta_2$) to obtain the transition probability matrix $P \in [0,1]^{4\times 4}$. In other words, the effective system provided by Alice is the cascade of the learned classifier, and $ P$ determines the Markov transition. Alice's resulting out-sample performance from the testing data is recorded in Figure~\ref{fig_text1}a, where we considered different $\beta$'s summarized in Table~\ref{table_text}. As we expected, a larger value of $\beta_2$ cuts off more information propagated from $\Y$ to $Y$, resulting in a degraded out-sample performance of Alice's effective system.

We also visualize the model privacy-utility tradeoff by the following procedure. 
First, we approximate the utility that quantifies the useful information conveyed by Alice. With Alice's trained model and the optimally laundered $Y$ (from training data), we retrain another Naive Bayes classifier and generate predictions on the testing data, denoted by $y_{\c}^{\textrm{pred}}$. Meanwhile, we apply Alice's authentic model to generate predictions on the testing data, denoted by $y_{\C}^{\textrm{pred}}$. We approximate the model utility as the accuracy measure between $y_{\c}^{\textrm{pred}}$ and $y_{\C}^{\textrm{pred}}$. The model utility can be approximated by other measures. We also considered retraining methods such as tree-based classifiers and average F1-score in computing the model utility, and the results are consistent in the data experiments. 
Second, we approximate the privacy leakage as Alice's prediction accuracy on the testing data.
Intuitively speaking, for a given utility, larger out-sample prediction accuracy indicates less information laundered, indicating a higher privacy leakage of Alice's internal model.
We plot the model leakage against utility obtained from our proposed solution in Figure~\ref{fig_text1}b.

For comparison, we considered a benchmark method described below. The conditional probability mass function $\pB(\cdot \mid \y)$ given each $\y$ is independently drawn from a Dirichlet distribution with parameters $[b,\ldots, b, a, b, \ldots, b]$, where $a$ is the $\y$th entry. An interpretation of the parameter is that a larger $a/b$ favors a larger probability mass at $y=\y$ (and thus less noise). We consider different pairs of $(a,b)$ so that the tradeoff curve matches the counterpart curve from our proposed method. The curve is averaged over 50 independent replications. As shown in Figure~\ref{fig_text1}b, the results indicate that our proposed solution produces less leakage (and thus better privacy) for a given utility.  

We also plot heatmaps illustrating the transition laws $\pB(y \mid \y)$ obtained from the proposed information laundering in Figure~\ref{fig_text2}. We considered two cases, where there are 20\% class-0 labels, and where there are 1\% class-0 labels (by removing related samples from the original dataset). Intuitively, once we reduce the size of class-0 data in (b), the transition probabilities $\pB(0 \mid \y)$ for each $\y$ should be smaller compared with those in (a) as class-0 is no longer `important'. Our expectation is aligned with Figure~\ref{fig_text2}, where the first row in (b) are indicated by darker colors compared with that in (a), meaning that the class-0 is less likely to be observed.

\begin{table}[tb]
\caption{Summary of the tradeoff parameters used for the OIL-Y algorithm and random benchmark from Dirichlet distributions (averaged over 50 independent replications), and the corresponding model utility (as evaluated by the closeness of Alice's authentic and effective systems), as well as the model privacy leakage (as evaluated by Alice's out-sample accuracy). }
\label{table_text}
\begin{center}
\begin{tabular}{cccccccc}
\hline
\hline
\multirow{3}{*}{Proposed}                                                   & $\beta$ & $0$     & $1$    & $2$    & $5$    & $20$   & $50$    \\ \cline{2-8} 
                                                                            & Utility & $1.00$  & $0.86$ & $0.78$ & $0.68$ & $0.46$ & $0.30$  \\ \cline{2-8} 
                                                                            & Leakage & $0.79$  & $0.64$ & $0.53$ & $0.45$ & \$0.35 & $0.30$  \\ \hline
\multirow{3}{*}{\begin{tabular}[c]{@{}c@{}}Random\\ Benchmark\end{tabular}} & $a,b$ & $100,1$ & $20,1$  & $10,1$  & $5,2$ & $5,3$ & $10,10$ \\ \cline{2-8} 
                                                                            & Utility & $0.96$  & $0.88$ & $0.79$ & $0.49$ & $0.39$ & $0.23$  \\ \cline{2-8} 
                                                                            & Leakage & $0.77$  & $0.70$ & $0.62$ & $0.40$ & $0.34$ & $0.27$  \\ \hline
\hline
\end{tabular}
\end{center}
\end{table}

\begin{figure}[tb]
\begin{center}
\centerline{\includegraphics[width=0.6\columnwidth]{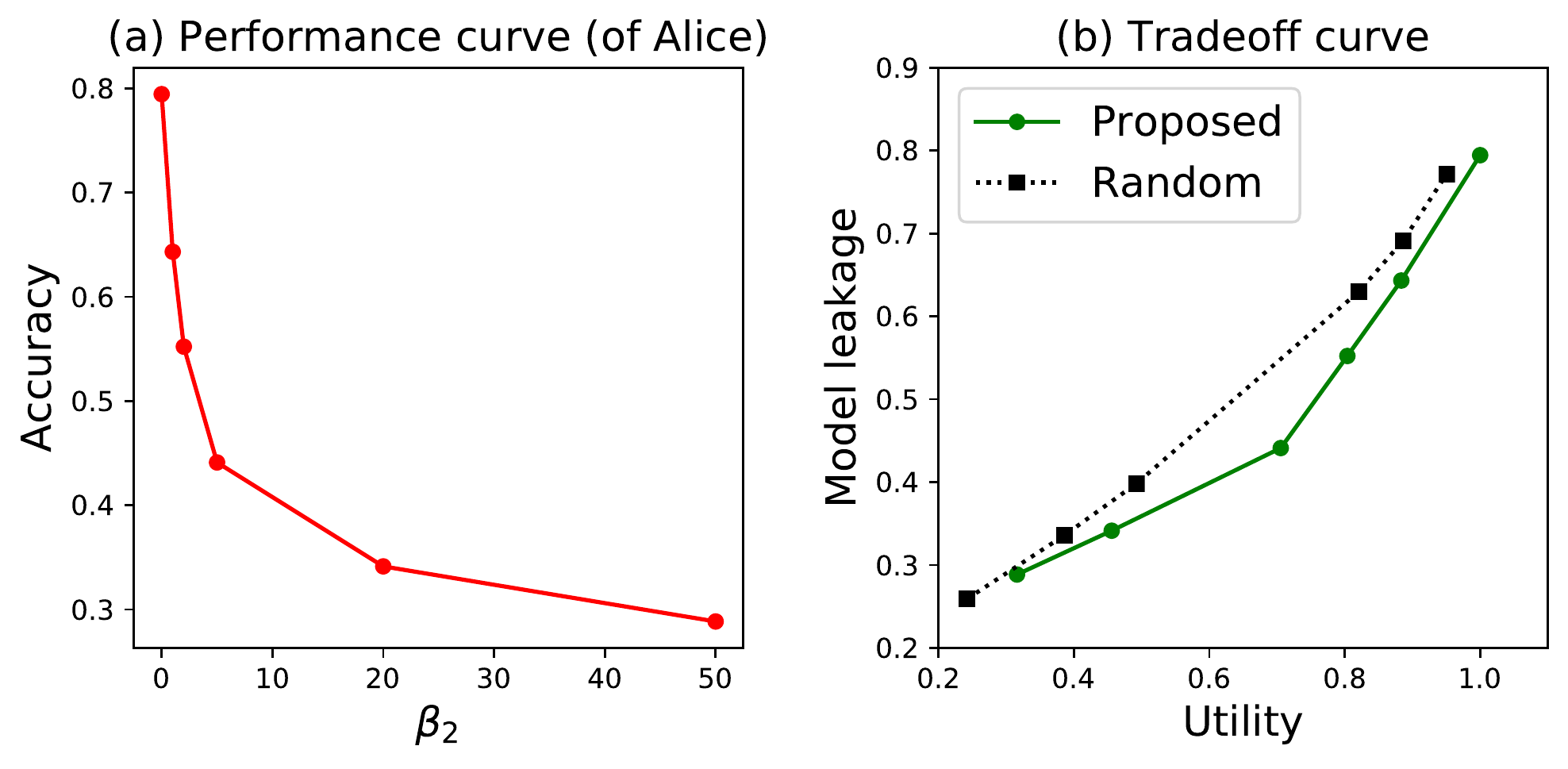}}
\caption{Visualization of (a) Alice's out-sample performance against the tradeoff parameter $\beta_2$ in Information Laundering, and (b) Alice's model utility-privacy tradeoffs under the information laundering technique and the random benchmark using Dirichlet-generated transition laws. Detailed parameters are summarized in Table~\ref{table_text}. }
\label{fig_text1}
\end{center}
\end{figure}

\begin{figure}[tb]
\begin{center}
\centerline{\includegraphics[width=0.6\columnwidth]{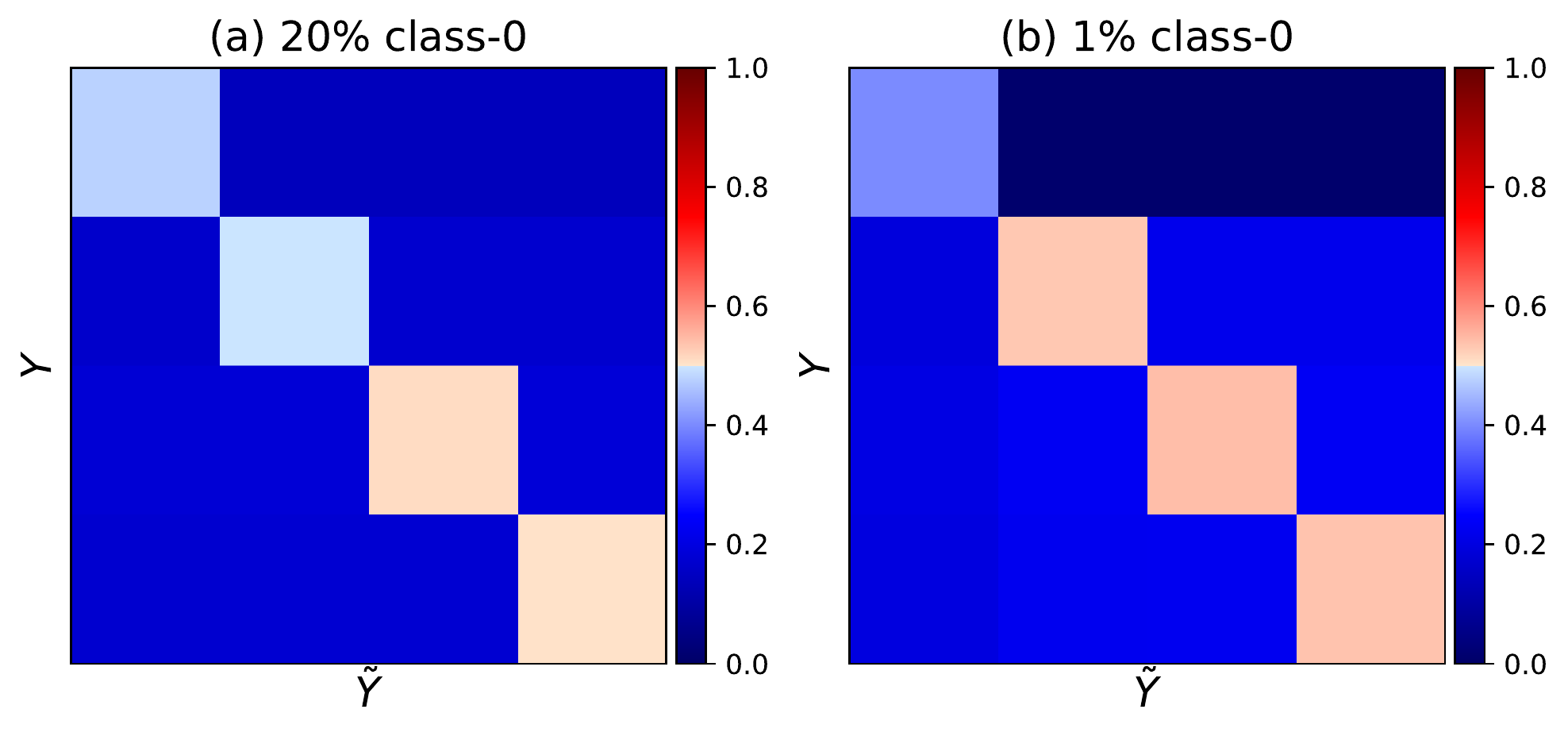}}
\caption{Heatmap showing the transition law $\pB(y \mid \y)$ for information laundering, under (a) 20\% of class-0 labels, and (b) 1\% of class-0 labels. In contrast with the case (a), the class-0 is negligible in (b) and thus the transition probabilities $\pB(0 \mid \y)$ for each $\y$ becomes smaller (as indicated by darker colors).}
\label{fig_text2}
\end{center}
\end{figure}

\subsection{Data study: Life Expectancy regression} \label{subsec_data2}

In this experimental study, we use the `life expectancy' dataset provided by \textit{kaggle} open-source data~\cite{lifeData}, originally collected from the World Health Organization (WHO). 
The data was collected from 193 countries from 2000 to 2015, and Alice's model is a linear regression that predicts life expectancy using potential factors such as demographic variables, immunization factors, and mortality rates. 
This experiment is intended to illustrate the utility-privacy tradeoff and our proposed solution in regression contexts. 

In the regression model, we quantize the output alphabet $\YY$ by $30$ points equally-spaced in between $\mu \pm 3\sigma$, where $\mu, \sigma$ represent the mean and the standard deviation of $Y$ in the training data. We then applied a similar procedure as in Subsection~\ref{subsec_data2}, except that we use the empirical $R^2$ score as the underlying measure of utility and leakage. The empirical $R^2$ score has been commonly used for evaluating regression performance, and it can be negative, meaning that the predictive performance is worse than sample mean-based prediction~\cite{R2score}.
In particular, we obtain tradeoff curves in Figure~\ref{fig_reg1}, where we compared the information laundering results based on the proposed technique and Dirichlet-based technique (similar to that in Subsection~\ref{subsec_data2}). The different $\beta$'s and Dirichlet parameters are summarized in Table~\ref{table_reg}. The detailed performance values are also summarized in Table~\ref{table_reg}. 

To illustrate the impact of tradeoffs, we considered two cases corresponding to $\beta_2=1$ and $\beta_2=20$. We compute the transition laws $\pB(y \mid \y)$ obtained from Algorithm~\ref{algo_Y} and illustrate them in the first row of Figure~\ref{fig_text2}. 
We also take the snapshot at the year $\Y=69$ and plot the conditional density function $\pB(\cdot \mid \Y=69)$ (as approximated by the quantizers) in the second row of Figure~\ref{fig_text2}.
The visualized results are aligned with our expectation that a larger penalty of model leakage will cause a more dispersed transition law.

\begin{table}[tb]
\caption{Summary of the tradeoff parameters used for the OIL-Y algorithm and random benchmark from Dirichlet distributions (averaged over 50 independent replications), and the corresponding model utility (as evaluated by the closeness of Alice's authentic and effective systems), as well as the model privacy leakage (as evaluated by Alice's out-sample accuracy). The underlying metric used is the empirical $R^2$, which can be less than zero. }
\label{table_reg}
\begin{center}
\begin{tabular}{cccccccc}
\hline
\hline
\multirow{3}{*}{Proposed}                                                   & $\beta$ & $0$        & $1$      & $2$      & $5$      & $8$       & $20$      \\ \cline{2-8} 
                                                                            & Utility & $0.99$     & $0.92$   & $0.84$   & $0.62$   & $0.48$    & $0.35$    \\ \cline{2-8} 
                                                                            & Leakage & $0.79$     & $0.42$   & $0.09$   & $-0.26$  & $-0.45$   & $-0.51$   \\ \hline
\multirow{3}{*}{\begin{tabular}[c]{@{}c@{}}Random\\ Benchmark\end{tabular}} & $(a,b)$ & $10000, 1$ & $200, 5$ & $100, 5$ & $100, 8$ & $100, 10$ & $100, 20$ \\ \cline{2-8} 
                                                                            & Utility & $0.99$     & $0.77$   & $0.58$   & $0.42$   & $0.36$    & $0.15$    \\ \cline{2-8} 
                                                                            & Leakage & $0.78$     & $0.10$   & $-0.07$  & $-0.15$  & $-0.17$   & $-0.22$   \\ \hline
\hline
\end{tabular}
\end{center}
\end{table}

\begin{figure}[tb]
\begin{center}
\centerline{\includegraphics[width=0.6\columnwidth]{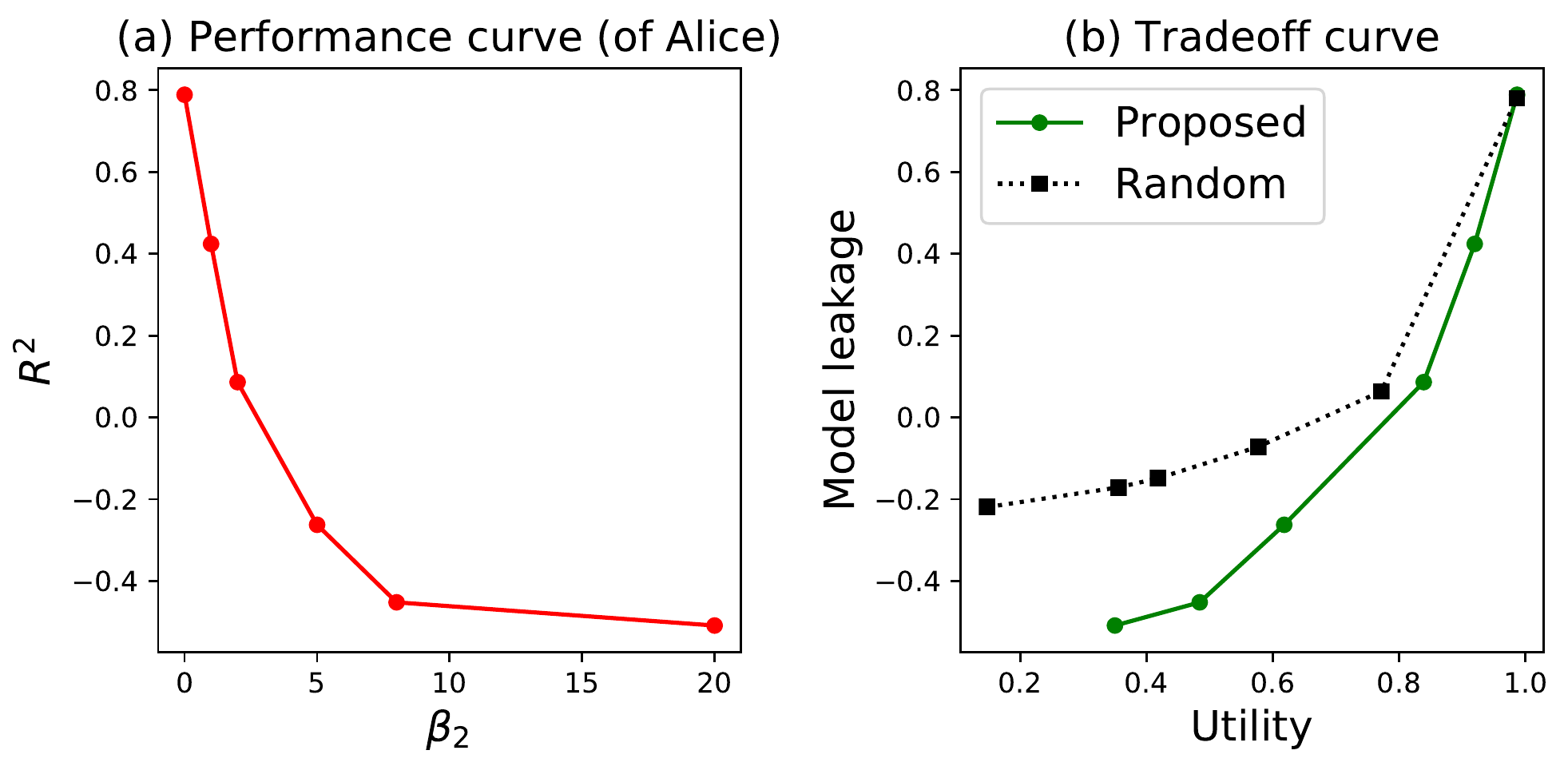}}
\caption{Visualization of (a) Alice's out-sample performance against the tradeoff parameter $\beta_2$ in Information Laundering, and (b) Alice's model utility-privacy tradeoffs under the information laundering technique and the random benchmark using Dirichlet-generated transition laws. Detailed parameters are summarized in Table~\ref{table_reg}.}
\label{fig_reg1}
\end{center}
\end{figure}

\begin{figure}[tb]
\begin{center}
\centerline{\includegraphics[width=0.6\columnwidth]{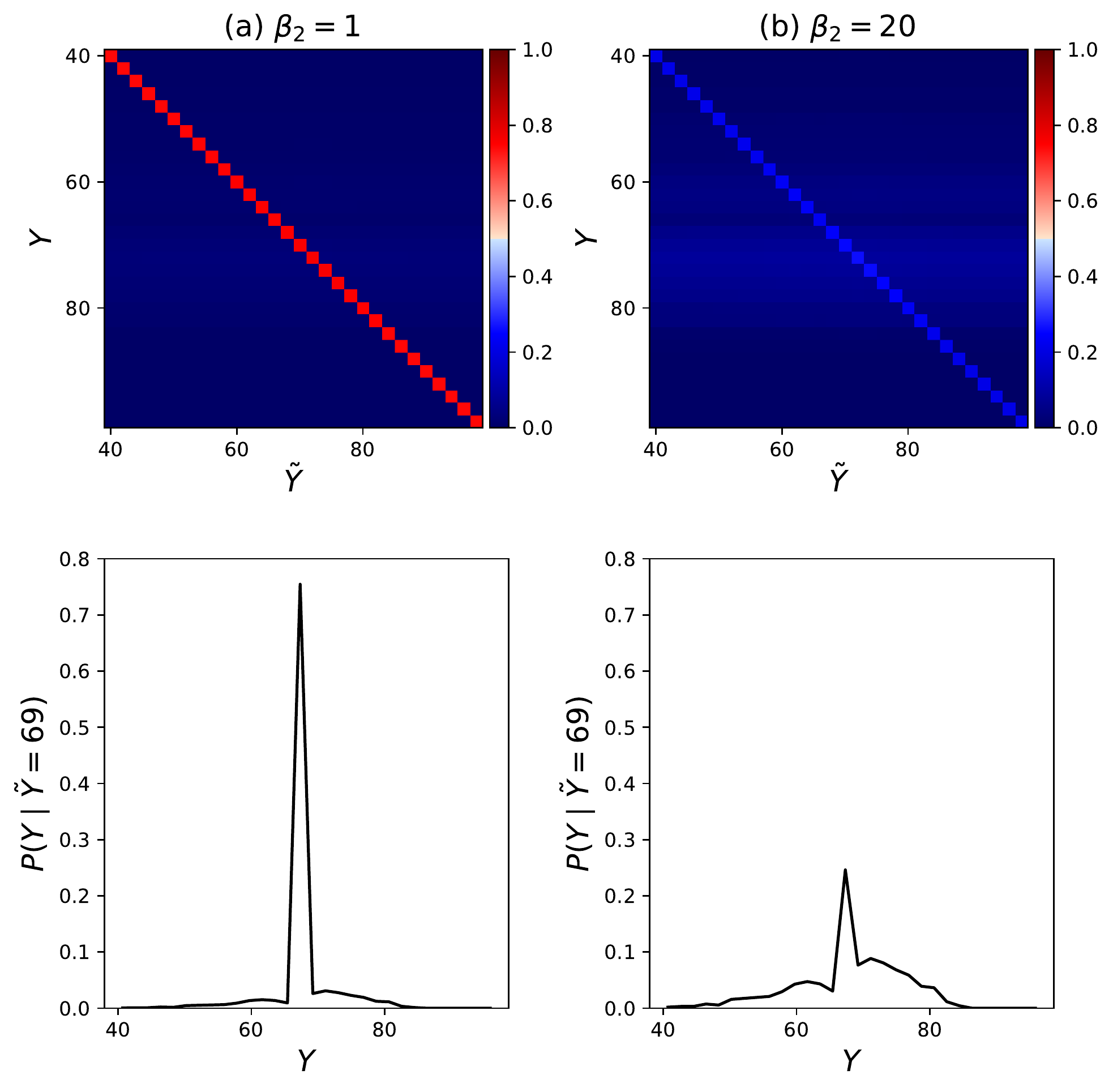}}
\caption{Heatmap (row 1) showing the transition laws optimized from information laundering, under (a) $\beta_2=1$, and (b) $\beta_2=20$. The snapshots of probability mass functions of $Y$ conditional on $\Y=69$ are also visualized (row 2).}
\label{fig_reg2}
\end{center}
\end{figure}

\end{document}